\newcommand\F{\ensuremath{\mathcal{F}}}
\newcommand\E{\ensuremath{\mathbb{E}}}
\newcommand\R{\ensuremath{\mathbb{R}}}
\newcommand\fT{\ensuremath{\mathfrak{T}}}
\newcommand\cI{\ensuremath{\mathcal{I}}}
\newcommand\cO{\ensuremath{\mathcal{O}}}
\newcommand\eps{\epsilon}
\newcommand\talph{\tilde{\alpha}}
\newcommand\sqc{\sqrt{c}}
\newcommand\PP{\ensuremath{\mathbb{P}}}
\DeclareMathOperator\sgn{sign}
\newtheorem{lemma}{Lemma}
\newtheorem{corollary}{Corollary}
\newtheorem{proposition}{Proposition}
\theoremstyle{remark}
\newtheorem{remark}{Remark}
\begin{document}

\title{Optimal Execution with Dynamic Order Flow Imbalance}
\author{Kyle Bechler \\ Statistics and Applied Probability\\
  University of California, Santa Barbara\\ \and Michael Ludkovski\\
  Statistics and Applied Probability\\
  University of California, Santa Barbara\\
}
\date{\today}

\maketitle

\begin{abstract}
We examine optimal execution models that take into account both market microstructure impact and informational costs.
Informational footprint is related to order flow and is represented by the trader's influence on the flow imbalance process, while microstructure influence is captured by instantaneous price impact. We propose a continuous-time stochastic control problem that balances between these two costs. Incorporating order flow imbalance leads to the consideration of the current market state and specifically whether one's orders lean with or against the prevailing order flow, key components often ignored by execution models in the literature. In particular, to react to changing order flow, we endogenize the trading horizon $T$. After developing the general indefinite-horizon formulation, we investigate several tractable approximations that sequentially optimize over price impact and over $T$. These approximations, especially a dynamic version based on receding horizon control, are shown to be very accurate and connect to the prevailing Almgren-Chriss framework. We also discuss features of empirical order flow and links between our model and ``Optimal Execution Horizon'' by Easley et al (Mathematical Finance, 2013).

\medskip \noindent
Keywords: optimal execution, order flow, limit order books, information footprint

\end{abstract}

\section{Introduction}

The concept of optimal execution in financial markets is concerned with realizing the best value for the asset traded via optimization with respect to the incurred trading costs. These costs are driven by two fundamental components:
market microstructure frictions and informational asymmetries. Market microstructure implies that market liquidity is finite and trades generate price impact. Informational costs reflect the fact that trades are observed by other participants who will then adjust their own strategies and views of the asset value and create adverse selection.

The complexity of today's market micro-structure has led to a wide variety of approaches that focus on different facets of the problem. 
The seminal work of Almgren and Chriss \cite{almgren2012optimal,almgren2001optimal,almgren2003optimal} originated the framework of diffusive mid-price models where trading is continuous and price impact depends on the trading {rate}. Extensions of this model to account for limit order book resiliency can be found in \cite{alfonsi2010optimal}, and for other transient price impacts in \cite{gatheral2012transient}. Alternatively, focusing on trading via limit orders, \cite{bayraktar2011optimal,cartea2011buy,gueant2012optimal} have investigated execution under fill risk with discontinuous inventory trajectories. Hybrid models that mix limit and market orders appear in \cite{cartea2014optimal,guilbaud2013optimal,Huitema13}.

The above approaches provide frameworks for modelling price impact and execution costs but say nothing about informational costs. While there are some studies \cite{moallemi2013cost} that analyze adverse selection, so far research in this direction has been mostly confined to empirical studies of long-term (``permanent'') price impact. For example, Bouchaud et al.~\cite{bouchaud2008markets,bouchaud2010price} and Farmer et al.~\cite{Farmer} analyze the characteristics of order flow data, asserting that order flow is predictable over time-scales of up to several days and that order flow plays an important role in the formation of prices. A notable recent contribution is by Easley at al.~\cite{easley2012optimal}, who minimize a combination of information leakage and volatility risk in a static setting. Importantly, \cite{easley2012optimal} assert that the liquidity process is sensitive to the execution strategy of the trader with the linkage through order flow.   Specifically, the theory of market toxicity
 \cite{easley2012flow} implies that the expected absolute imbalance in order flow can be used to approximate the probability that a trade is information based. Therefore one-sided markets are likely to create adverse selection and consequently in such toxic conditions
market makers widen the range at which they provide liquidity. A qualitatively similar result was shown by Cartea et al.~\cite{cartea2011buy} where the market maker's order placement shifts deeper into the limit order book in response to one-sided market order flow.  It follows that by influencing the order flow, the trader's actions create a transient impact on market toxicity and hence execution costs.

In the present paper we bridge the above ideas within a combined dynamic model. We maintain the basic features of the Almgren-Chriss setup, including continuous trading and (quadratic) instantaneous price impact arising from market microstructure. However, we also incorporate a novel stochastic factor $(Y_t)$ for order flow, which is similarly impacted from executed trades. The  transient impact on $Y$ represents the informational footprint of the trader and introduces a feedback loop into the problem. It allows the trader to react to changing market conditions, in particular markets changing from being buy-driven to ask-driven and vice versa. This order flow feedback is anecdotally incorporated in most implemented practitioner algorithms and has been questioned in connection with liquidity crises (notably the Flash Crash, see~\cite{kirilenko2011flash}). We use our model to examine the dynamic problem of \emph{Optimal Execution Horizon} that was introduced in a 1-period version by Easley et al.~\cite{easley2012optimal}. Thus, in contrast to Almgren-Chriss and typical execution models, we endogenize the execution horizon which is optimally chosen depending on market liquidity. Intuitively, trading should slow down when informational costs are high, and speed up when they are low. From that point of view, rather than a pure optimization problem, optimal execution is about trading-off price impact and information leakage against timing risk. Moreover, endogenizing execution horizon generates dynamic execution strategies even if the underlying asset value is a martingale. This allows to obtain inherently adaptive trading in contrast to deterministic strategies in popular models (e.g.~\cite{alfonsi2010optimal,almgren2001optimal,almgren2003optimal}) that consider only price impact.

The rest of the paper is organized as follows. In Section \ref{sec:model} we setup our dynamic order flow model and derive the associated HJB equation with endogenous horizon. In Section \ref{sec:linear-quadratic} we address the respective fixed horizon problem which allows us to obtain several closed-form approximate strategies once we optimize over the execution horizon. The latter is done in Section \ref{sec:optimal-T} where we also compare several strategies that result.  Section \ref{sec:implement} looks at some stylized facts about order flow and discusses calibration.  

\section{The Optimal Execution Problem}\label{sec:model}
The problem in view is liquidation of a position of size $x=x_0$. We assume a continuous-time setup, with trading taking place continuously and via infinitesimal amounts. Namely, the trader trades $\dot{x}_t dt$ shares at time $t$, so that his inventory $x_t$ follows the dynamics
\begin{align}
dx_t = \dot{x}_t \,dt.
\end{align}
Execution ends at the random horizon $$T_0 := \inf \{ t \ge 0 : x_t = 0\},$$ whereupon inventory is exhausted. Throughout, time is supposed to be in traded-volume units.


Beyond the inventory $x_t$, the main state variable of our model is the order flow imbalance $Y_t$. The order flow imbalance captures the intrinsic fluctuations among supply and demand for the security realized by the unequal amounts of buyer- and seller-initiated orders and trades. On the short time-scale (intra-day to several days) it is empirically quasi-stationary, in the sense that the observed volume is several orders of magnitude larger than the deviations in net imbalance, cf.~Section \ref{sec:empirical-flow}. Moreover it is highly noisy and appears to be mean-reverting to zero. Therefore, we choose to model $(Y_t)$ in terms of a mean-zero stationary process.

Let $Y^0$ represent the flow imbalance in the absence of the trader. As a starting point we take $(Y^0_t)$ to be an Ornstein-Uhlenbeck process with mean-reversion parameter $\beta$,
\begin{align}\label{eq:y0}
dY^0_t = - \beta Y^0_t \, dt + \sigma \, dW_t.
\end{align}
The mean-reversion strength $\beta$ controls the time-scale of the memory in flow imbalance, while the volatility $\sigma$ controls the size of fluctuations in flow imbalance. Since imbalance is intuitively in the range $[-1,1]$ (representing markets with 100\% buyers, and 100\% sellers respectively), the fraction $\sigma^2/(2\beta)$, which is the stationary variance of $Y^0$, should be on the order of $\sigma^2/(2\beta) \in [0.01,0.2]$.

%
%
The execution program of the trader introduces a downward pressure on the order imbalance process as a result of his selling. The information leaked by the trader's action creates a drift in the realized order imbalance $Y_t$, pushing it below $Y^0_t$. By displacing other orders, the trader impacts expectations regarding future order flows and generates adverse selection. A more precise description of this mechanism using (more realistic) discrete-time setup and discrete trades is presented in Section \ref{sec:discrete-time}. We postulate that
\begin{equation}\label{dyn}
dY_t = -\beta Y_t dt - \phi(\dot{x}_t) dt + \sigma dW_t,
\end{equation}
where $\phi(\cdot)$ captures the information leakage. Observe that
\begin{align}
Y_t = Y_t^0 + \int_0^t e^{\beta(s-t)} \phi(\dot{x}_s) \, ds,
\end{align}
so the execution program generates an exponentially decaying impact on $Y^0$. We assume that $\phi: \R_+ \to \R_+$ is non-decreasing with $\phi(0) = 0$. The three main cases we consider are: $\phi(\dot{x}_t) \equiv 0$ corresponding to zero informational footprint; $\phi(\dot{x}_t) = \phi_t$ corresponding to deterministic (but non-zero) impact; and proportional (linear) impact $\phi(\dot{x}_t)= \eta \dot{x}_t$.  Linear impact is computationally convenient, though not necessarily the most realistic reflection of how a trader's activity might influence expectations regarding order flow.  Impact that depends on the current value of the flow imbalance is investigated in Section \ref{sec:discrete-time}.

The information leakage is ``abstract'' in the sense that it does not generate trading costs per se. However, in line with \cite{easley2012optimal} we assume that there are adverse selection costs associated with trading in an unbalanced market. Here we assume that this cost is symmetric in $Y_t$ (but note that agent's actions induce only one-sided effects of $Y_t$); for tractability we take it quadratic. In addition, we carry through the two costs from Almgren-Chriss: instantaneous impact $g(\dot{x}_t)$ of trading at rate $\dot{x}_t$, and inventory cost $\lambda(x_t)$ for carrying a position of $x_t$ at $t$. 

The continuous-time execution problem is to minimize the sum of the corresponding expected execution costs
\begin{equation} \label{oppr}
 \inf_{(x_t) \in \mathcal{X}(x)} \E_{x,y} \left[\int_{0}^{T_0} \left(g(\dot{x}_s)+ \kappa Y^2_s+\lambda(x_s)\right) \, ds\right],
\end{equation}
over admissible execution strategies $(x_t) \in \mathcal{X}(x)$. The above expectation is conditional on an initial value $Y_0=y$ and initial inventory $x$ which induce the measure $\mathbb{P}_{x,y}$. The horizon $T_0$ is part of the solution, so that the optimization is formally taking place on the whole future $s \in [0,\infty)$. While the first term in \eqref{oppr} incentivizes the trader to slow down in order to reduce his immediate liquidity costs, the next two terms of the cost functional are such that under certain market conditions, it may be optimal to accelerate trading in order to exit the market sooner.

Let $\F_t = \sigma(Y_s: s \le t)$ denote the filtration generated by $Y$. Admissible strategies $(x_t) \in \mathcal{X}(x)$, consist of $(\F_t)$-progressively measurable, absolutely continuous trajectories $t\mapsto x_t$, such that $x_0=x$, $ \lim_{t \to \infty} x_t=0$ and $\int_{0}^{\infty}g(\dot{x}_s) \, ds< \infty$ $\PP$-a.s. We also require the following assumptions on the model ingredients:
\begin{itemize}
\item Instantaneous impact function $g:\mathbb{R}_+ \mapsto \mathbb{R}_+ $ is strictly convex;
\item The informational cost parameter $\kappa \ge 0$;
\item Inventory risk $\lambda:\mathbb{R}_+ \mapsto \mathbb{R}_+$ is non-decreasing in $x$.
\end{itemize}

The cost functional in \eqref{oppr} is consistent with other approaches taken in the literature.  The assumption that $g(\dot{x})$ is convex matches the empirical fact that market participants like to divide a large ``parent'' order into smaller orders in order to reduce trading costs. In limit order books (LOB), $g(\cdot)$ represents the depth of the LOB on the ask-side. If this depth is constant, the instantaneous trading cost is quadratic $g(\dot{x}) = \dot{x}^2$ (by rescaling $\kappa$ and $\lambda$ we assume without loss of generality that the leading coefficient is 1). This assumption also appears in \cite{almgren2001optimal,cartea2014optimal,gatheral2011optimal} among others. Because our primary focus is on informational costs, we assume for the moment that there are no other transient/permanent impacts on the asset value $S_t$ and further posit that strategies are independent of asset dynamics. In Section \ref{sec:conclusion} we return to this issue and discuss extensions that allow for positive correlation between asset price dynamics $(S_t)$ and order flow $(Y_t)$.


Our second cost term $\kappa Y_t^2$ is motivated primarily by the model presented in \cite{easley2012optimal} and captures the cost of information leakage. The main premise is that liquidity costs (e.g.~likelihood of adverse selection) are higher in markets with unbalanced flow. However, unlike \cite{easley2012optimal}, our model incorporates mean reversion and stochasticity in the flow imbalance process $(Y_t)$.  So the impact being modelled by this term is transient, and somewhat related to models that investigate resiliency in the limit order book \cite{alfonsi2010optimal,gatheral2012transient}. For other papers that considered stochastic execution costs see \cite{almgren2012optimal}.

   The last term $\lambda(x_t)$ in \eqref{oppr} represents timing risk, penalizing the trader for leaving his position exposed to adverse price movements.   Several risk terms have been applied within the execution literature.  The seminal work by Almgren and Chriss, which optimizes over a mean-variance cost functional, reduces to a calculus of variations problem and the risk term $\lambda(x ) = cx^2 $.  Gatheral and Schied~\cite{gatheral2011optimal} investigated a time-weighted value-at-risk measure proportional to $\lambda(x) = c x$. In the context of timing risk, $\lambda(x) = c$ generates costs that are proportional to execution time which is a non-trivial modification once the horizon $T_0$ is not fixed.

\begin{remark}
In  \eqref{oppr} we restrict our attention to pure selling strategies, so that $x_t$ is non-increasing. As such all our cost functionals are only defined for positive selling rates. Depending on the strength of information leakage, it is possible that the constraint $\dot{x}_t \ge 0$ is binding, i.e.~execution is suspended until market conditions improve. It is beyond the scope of this paper to extend the framework to two-sided trading algorithms that raise the issue of potential market manipulation \cite{alfonsi2012order,gatheral2012transient}.
\end{remark}


\subsection{Execution Costs in Limit Order Books}
Before proceeding with solving \eqref{oppr}, we discuss the connections between our setup and trading in electronic LOB markets that currently dominate equity trading in US and Europe. Trading in a LOB leaves a double footprint, both in space and in time. In space, an executed sell order consumes the matching standing limit orders on the bid side. (In our framework since there is no fill risk all trades are assumed to be market orders.) This shortens the respective queues on the bid side of the book and hence can move the best-bid. This is the aforementioned price impact represented by $g(\dot{x})$. Temporally, the executed order is recorded by other participants on the exchange message ticker affecting the observed order \emph{flow}. The effect is both direct (the immediate fact that a sell order of $\dot{x}_s$ shares was executed), and indirect (the fact that market participants adjust their statistical view of the order flow over time). Thus, to the extent that market participants monitor the ticker (rather than just observe the snapshots of the LOB), an order generates informational footprint. There is a lot of anectodal evidence that many HFT algorithms indeed track the time series of orders placed (for example to detect temporal trading patterns) and hence will react to this footprint. This implies that information costs are at least as important as instantaneous liquidity consumption.

Modelling order flow remains in its infancy. Indeed, while spatially the LOB can be easily summarized as a collection of queues (modeled via say the depth function \cite{alfonsi2010optimal}), the time series of the exchange ticker are much more complicated as market participants process multiple streams of information. There are both executed trades (i.e.~trades triggered from market orders) and limit orders, which themselves can be added, cancelled, or modified in other ways depending on the exchange. Orders further carry volume, time stamp and possibly participant type (and limit orders can be entered at any level of the LOB). These multi-dimensional data is moreover coupled in nontrivial ways, with temporal links both within series (e.g.~auto-correlation in the inter-order durations) and across series (e.g.~executed market orders tend to increase the arrival rate of limit orders at the touch). See for example a recent model of Cartea et al.~\cite{cartea2011buy} who fitted cross-exciting Hawkes processes to the basic order flow at the best-bid and best-ask queues. Given this challenges, our proposal to model order flow by a Ornstein-Uhlenbeck process is a compromise that will obviously need more adjustments before calibration to actual data. In Section \ref{sec:empirical-flow} we return to this question with some empirical stylized facts about order flow.

\begin{remark}
In this paper we focus on the temporal order flow and its associated imbalance that was conjectured by Easley et al.~\cite{easley2011microstructure,easley2012flow,easley2012optimal} to be related to market toxicity.
Other authors, such as \cite{cont2012price}, have used the terminology of \emph{order flow imbalance} (or just order imbalance) for a different object, namely ``spatial'' order imbalance. Namely, motivated by queueing notation, they mean the net difference between standing limit orders at the best-bid and best-ask. As shown by \cite{cont2012price} and \cite{Donnelly14}, order imbalance is predictive of the next price move (i.e.~correlated with the probability of the next price to be an up-tick or a down-tick) and is closely monitored by most HFT algorithms. While the LOB depth is related to the history of submitted limit orders, the relationship is highly complicated (due to shifting mid-price, hidden orders, etc.). Consequently, our order flow imbalance is not meant to be tied directly to the LOB depth or any immediate LOB properties, but rather provide a temporal summary of recent orders submitted.
%
%
\end{remark}

\subsection{HJB Formulation}
To minimize (\ref{oppr}) we adopt the standard stochastic control approach, utilizing the dynamic programming principle and Hamilton-Jacobi-Bellman (HJB) PDE.  Within this framework strategies are defined by their rate of selling, $\alpha_t :=-\dot{x}_t$ and the class of admissible strategies $\mathcal{A}(x)$ consists of all nonnegative $(\F_t)$-progressively measurable processes $(\alpha_t)_{0\leq t \leq T_0}$ for which
\begin{equation*}
x^{\alpha}_t := \left(x-{\int_{0}^{t}\alpha_s \, ds} \right)_+, \quad 0 \leq t,
\end{equation*}
belongs to $\mathcal{X}(x)$.  The value function of our problem can be expressed as
\begin{align}\label{hjb V}
v(x,y) = \inf_{(\alpha_t) \in \mathcal{A}(x)} \E_{x,y} \left[\int_{0}^{T_0} \left(g(\alpha_s) + \kappa Y^2_s+\lambda(x_s^\alpha) \right)\, ds \right].
\end{align}
If it exists, we define the corresponding optimal strategy as $\alpha^*(x,y)$. For each path of underlying $Y^0_t$, $\alpha^*$ induces the realized execution horizon $T_0(x,y) = \inf \{ t: x^{\alpha^*}_t = 0\}$ which is a random variable taking values in $[0,\infty)$.


Standard arguments  suggest that the value function $v(x,y)$ will satisfy a Hamilton-Jacobi-Bellman equation of the form
\begin{align}\label{eq:hjb}
0 =  \frac{1}{2} \sigma^2 v_{yy}-\beta y v_y  + \kappa y^2 + \lambda(x)+ \inf_{\alpha \ge 0} \{ g(\alpha) - \alpha v_x  - \phi(\alpha) v_y\},
\end{align}
with the boundary condition $v(0,y) = 0$ for all $y$. We observe that \eqref{eq:hjb} is a nonlinear parabolic PDE in $(x,y)$ for which the corresponding theory  (e.g.~regarding existence of classical solutions) is rather limited.

For the remainder of the section we concentrate on the linear information leakage and quadratic price impact case $g(\alpha) = \alpha^2$, $\phi(\alpha)=\eta \alpha$. In that situation, the candidate optimizer in feedback form is
\begin{align}\label{eq:alpha}
\alpha^*(x,y)=\frac{v_x+\eta v_y}{2}.
\end{align}
Substituting this feedback control into the PDE \eqref{eq:hjb} we have
\begin{align}\label{eq:hjb2}
0 =  \frac{1}{2} \sigma^2 v_{yy}-\beta y v_y  + \kappa y^2 + \lambda(x)- \left(\frac{v_x+\eta v_y}{2}\right)^2 .
\end{align}

Due to the state dependence of the class of admissible strategies $\mathcal{A}(x)$, the problem \eqref{hjb V} is a finite-fuel control problem.  As a result, there does not appear to be a tractable closed form solution which satisfies the zero boundary condition along $x=0$.  In Appendix \ref{App:AppendixA1} we illustrate a relatively straightforward method for solving \eqref{eq:hjb2} numerically via a finite difference scheme.

To understand the feedback strategy in \eqref{eq:alpha}, we pause to consider the derivatives $v_x$ and $v_y$. As we will see, $v_x$ is always positive but $v_y$ can be either positive or negative. Consequently, the candidate in \eqref{eq:alpha} may fail to be non-negative.

\begin{lemma}
The map $x \mapsto v(x,y)$ is strictly increasing for any $y$.
\end{lemma}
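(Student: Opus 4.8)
The plan is to prove $v(x_1,y)<v(x_2,y)$ for any $0\le x_1<x_2$ by a \emph{truncate-and-stop} coupling: follow a (near-)optimal liquidation of the larger position $x_2$ only until $x_1$ shares have been disposed of, then stop. This produces an admissible strategy for $x_1$ whose cost is strictly below the cost of liquidating $x_2$.

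First I would fix $\alpha\in\mathcal{A}(x_2)$ optimal for $(x_2,y)$ (the non-attained case is handled at the end), with inventory path $x^{\alpha}$, imbalance $Y$, and horizon $T_0^{\alpha}$. Since $t\mapsto\int_0^t\alpha_s\,ds$ is continuous and reaches $x_2>x_1$, the stopping time $\tau:=\inf\{t\ge 0:\int_0^t\alpha_s\,ds=x_1\}$ is finite and $\tau<T_0^{\alpha}$, because $x_2-x_1>0$ shares of the $x_2$-program remain at $\tau$. Put $\tilde\alpha_t:=\alpha_t\mathbf{1}_{\{t\le\tau\}}$. Then $\tilde\alpha\ge 0$ is progressively measurable, its inventory $x^{\tilde\alpha}_t=(x_1-\int_0^{t\wedge\tau}\alpha_s\,ds)_+$ reaches $0$ exactly at $\tau$, and $\int_0^{\infty}g(\tilde\alpha_s)\,ds=\int_0^{\tau}g(\alpha_s)\,ds<\infty$ (using $g(0)=0$); hence $\tilde\alpha\in\mathcal{A}(x_1)$, with realized horizon $\tau$.

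Two observations, both exploiting the feedback form of the dynamics, do the work. (i) On $[0,\tau]$ the imbalance under $\tilde\alpha$ coincides pathwise with $Y$: the two copies of \eqref{dyn} start at $y$, are driven by the same $W$, and have identical drift there since $\phi(\tilde\alpha_s)=\phi(\alpha_s)$ for $s\le\tau$. (ii) $x^{\tilde\alpha}_s\le x^{\alpha}_s$ for all $s$, hence $\lambda(x^{\tilde\alpha}_s)\le\lambda(x^{\alpha}_s)$ since $\lambda$ is non-decreasing. Using also $g,\kappa Y^2,\lambda\ge 0$ and $\tau<T_0^{\alpha}$,
\begin{align*}
v(x_1,y)\;\le\;\E_{x_1,y}\!\Big[\int_0^{\tau}\!\big(g(\tilde\alpha_s)+\kappa Y_s^2+\lambda(x^{\tilde\alpha}_s)\big)ds\Big]\;\le\; v(x_2,y)-\E_{x_2,y}\!\Big[\int_{\tau}^{T_0^{\alpha}}\! g(\alpha_s)\,ds\Big].
\end{align*}
The subtracted expectation is strictly positive: over $(\tau,T_0^{\alpha}]$ the program $\alpha$ sells $x_2-x_1>0$ shares, so $\alpha_s>0$ on a set of positive measure, and a strictly convex $g$ with $g(0)=0$ is strictly increasing on $\mathbb{R}_+$, whence $g(\alpha_s)>0$ there. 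This gives $v(x_1,y)<v(x_2,y)$ whenever the $x_2$-infimum in \eqref{hjb V} is attained.

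For the general case I would rerun the construction with an $\eps$-optimal $\alpha$, which yields $v(x_1,y)\le v(x_2,y)+\eps$ (so $v(\cdot,y)$ is at least non-decreasing) and leaves us to exhibit a gap uniform in $\eps$. By Jensen's inequality the subtracted term dominates $\E[h(T_0^{\alpha}-\tau)]$ with $h(t):=t\,g((x_2-x_1)/t)>0$, so the gap could only fail to stay bounded away from $0$ if the horizons $T_0^{\alpha}$ diverged in probability along a minimizing sequence; ruling this out --- the one genuinely non-routine point --- uses $\kappa>0$: since $Y_s\le Y^0_s$ with $(Y^0_s)$ the stationary process \eqref{eq:y0}, a diverging horizon forces $\E_{x_2,y}[\int_0^{T_0^{\alpha}}\kappa Y_s^2\,ds]\to\infty$ by ergodicity, contradicting near-optimality. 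I expect this uniform-horizon estimate to be the main obstacle; the rest is the elementary truncation bookkeeping above.
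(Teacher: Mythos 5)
Your truncate-and-stop coupling is exactly the paper's argument: the paper likewise takes an $\eps$-optimal strategy for the larger position, stops it once $x$ shares have been sold (noting by absolute continuity of $t\mapsto x'_t$ that this happens strictly before $T_0$), and concludes strict inequality from the strictly positive cost accrued on the leftover interval. The only difference is that you explicitly address making the gap uniform along a minimizing sequence when the infimum is not attained --- a point the paper's one-line conclusion $v(x,y)\le v(x',y;\alpha')<v(x',y)$ silently glosses over --- so your version is, if anything, more careful than the original.
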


\begin{proof}
Fix $x < x' = x+\eps$ for a strictly positive $\eps$ and consider an ($\eps$-optimal) strategy $\alpha^\eps$ for $v(x',y)$. Let $T_x := \inf \{t :  x'_t = \eps \}$ be the random period to sell $x$ shares using $\alpha^\eps$. Then by absolute continuity of $t \mapsto x'_t$, $T_x < T_0(x',y)$. Moreover, $\alpha'(x,y) := \alpha^\eps_t(x',y) 1_{\{t \le T_x\}}$ is an admissible strategy for the initial conditions $(x,y)$ since it liquidates exactly $x'-\eps=x$ shares. Using the fact that $\alpha'$ is sub-optimal for $v(x,y)$ and that the second and third terms in \eqref{oppr} are strictly positive almost surely, we find $v(x,y) \le v(x',y; \alpha') < v(x',y)$.
\end{proof}



In \eqref{eq:hjb2}, the horizon is indefinite and ultimate liquidation is only modelled through the boundary condition. Thus, understanding the realized execution horizon $T_0(x,y)$ is only possible implicitly. Moreover, the nonlinearities in \eqref{eq:hjb2} make analysis intractable. To achieve tractability we  consider an approximate two-stage procedure. Thus, we first fix a horizon $T$ by imposing the constraint $T_0 = T$. We then solve the resulting fixed-horizon problem to
find the best strategy $\alpha^*(T,x,y)$ and value function $v(T,x,y)$. In the second step, we optimize over $T$, to find the statically optimal horizon $T^*(x,y)$. Finally, we build the semi-dynamic strategy $\talph(x_t,y_t)=\alpha^*( T^*( x_t,y_t), x_t,y_t)$. Thus, $\talph$ recomputes $T^*$ as the state variables $(x_t,y_t)$ evolve and uses the corresponding static trading rate. This approach is analogous to the receding-horizon setup in nonlinear control \cite{Primbs07}. Indeed, the initial use of $\alpha^*( T^*( x,y), x,y)$  at $t=0$ corresponds to model predictive control and $\talph(x_t,y_t)$ then continuously rolls the initial condition because of the stochastic fluctuations encountered. The above plan is implemented in Sections \ref{sec:linear-quadratic} and \ref{sec:optimal-T} respectively. In the latter section we also compare the execution trajectories and resulting costs from the various strategies.


\section{Linear Quadratic Setup on Finite Horizon}\label{sec:linear-quadratic}
Fix $T < \infty$. We consider the analogue of \eqref{eq:hjb} on $[0,T]$.   To avoid confusion we let $u$ denote the value function when defined on the fixed horizon:
\begin{equation}\label{vfun}
u(T,x,y)=\inf\limits_{(\alpha_t) \in \mathcal{A}(T,x)}\E_{x,y}\left[\int_{0}^{T}\alpha^{2}_s+\kappa Y^{2}_s+\lambda(x_s^\alpha) \, ds\right].
\end{equation}
For expository purposes, we use the time-to-maturity parametrization for $u$ so that the first argument $T$ represents time \emph{until} the deadline.
Strategies on $[0,T]$ are defined in similar fashion to those in Section \ref{sec:model} but with a constraint $x_T = 0$ at the terminal time $T$. Forced liquidation by $T$ is achieved by leveling an infinite penalty if not completed, leading to a singular initial condition of the form
\begin{equation}\label{eq:singular}
\lim\limits_{T\downarrow 0} u(T,x,y)=\begin{cases}
0 & \text{if }x=0 \\
+\infty & \text{if }x\neq 0.
\end{cases}
\end{equation}

Note that the second term in \eqref{vfun} continues accruing until the terminal time regardless of whether the liquidation is completed earlier. Therefore, instead of the boundary condition $v(0,y) = 0$ we have (due to zero being absorbing for $x_t$) $u(t,0,y) = \E_{0,y}[ \int_0^t \kappa (Y^0_S)^2 \,ds]$; see Lemma \ref{cor:expect} for explicit formula for the latter expression.

To obtain explicit solutions to \eqref{vfun}, the next section treats the case in which $\phi$ is independent of $\alpha$ over a fixed horizon.  In other words, the trader may impact the order flow process, but impact can be modelled in a deterministic fashion. Section \ref{sec:dynamic} then addresses the proportional footprint $\phi(\alpha)=\eta \alpha$ case, still over fixed time horizon.    It will be shown that the strategies obtained in Sections \ref{sec:myopic}-\ref{sec:dynamic} are not too suboptimal compared to the indefinite-horizon model laid out in Section \ref{sec:model}.

\subsection{Myopic Execution Strategies}\label{sec:myopic}
In classical optimal execution models \cite{alfonsi2010optimal}, \cite{almgren2001optimal} and \cite{almgren2003optimal}, optimal execution rates are deterministic, i.e.~$\alpha_t$ is pre-determined. In this scenario, informational costs would also be deterministic. Therefore, we examine the case where $\phi(\alpha)$ is independent of $\alpha$ (but possibly depends on time $t$). Under that assumption, we can separate the two terms in \eqref{oppr} since the dynamics of $(Y_t)$ are not directly affected by the trader; the performance criterion simplifies to
\begin{equation}\label{simp}
\inf_{(x_t) \in \mathcal{X}(x)} \left( \int_0^T  \dot{x}_s^2 + \lambda( x_s) ds \right) + \int_0^T  \kappa \E_y[ Y_s^2] ds.
\end{equation}
Because $(Y_t)$ is independent of the control $\alpha$, optimal strategies are defined only by the first term in \eqref{simp}. Consequently, the resulting $(\alpha_t)$ is independent of $Y_t$ and hence $t \mapsto x^*_t$ is deterministic.
Thus, strategies based on \eqref{simp} are myopic in the sense that they entirely ignore the potential ``footprint" left by the trader's actions, instead focusing solely on instantaneous cost and inventory risks. The following Lemma, proven in Appendix \ref{App:AppendixA} provides the solution to \eqref{simp} for popular choices of inventory risk.

\begin{lemma}\label{lem:exec-curve}
Consider the calculus-of-variations problem of finding
$$
\cI(T,x) := \inf_{(x_t)} \int_0^T  (\dot{x}_t^2 + \lambda(x_t)) \, dt
 $$
where the minimization is over all absolutely continuous curves $t \mapsto x_t$ with $x_0 = x$, $x_T = 0$ and under the constraint that $x_t$ is non-increasing. Then the optimal ``myopic'' strategies (with $\alpha_t \equiv -\dot{x}_t$) are
\begin{align} \label{eq:x-star-lin}
\left\{ \begin{aligned}
 x^{ML}_t & = \frac{x (T-t)}{T}; \\
\alpha^{ML}_t & =  \frac{x}{T}; \\
 \cI^{ML}(T,x) & =  \frac{x^2}{T}, \end{aligned}\right\}
  &  \quad \text{ if } \lambda(x) = 0;\\
\label{eq:x-star-hyp}
\left\{
\begin{aligned}
x^{MH}_t & =  \frac{x \sinh(\sqrt{c}(T-t))}{\sinh(\sqrt{c} T)}; \\
\alpha^{MH}_t & = \frac{\sqrt{c} x \cosh( \sqrt{c}(T-t) )}{\sinh(\sqrt{c} T)}; \\
\cI^{MH}(T,x) & = \sqrt{c} x^2 \coth(\sqrt{c} T ),  \end{aligned} \right\} & \quad \text{ if } \lambda(x) = c x^2;
\end{align}
\begin{align}\label{eq:x-star-quad}
\left\{ \begin{aligned}
x^{MQ}_t & =  \left(\frac{c t^2}{4}-t\left(\frac{c\hat{T}}{4}+\frac{x}{\hat{T}}\right)+x \right) \mathbbm{1}_{\{ t<\hat{T}\}}; \\
\alpha^{MQ}_t & = \left(\frac{c \hat{T}}{4}+\frac{x}{\hat{T}}-\frac{c t}{2}\right)\mathbbm{1}_{\{ t<\hat{T}\}}; \\
\cI^{MQ}(T,x) & = \left(-\frac{c^2 \hat{T}^3}{48}+\frac{c \hat{T} x}{2}+\frac{x^2}{\hat{T}}\right);
 \\
& \text{where }\; \hat{T}  := \min(T, \frac{2\sqrt{x}}{\sqrt{c}}),
\end{aligned} \right\}
 & \quad \text{ if } \lambda(x) = c x.
\end{align}
\end{lemma}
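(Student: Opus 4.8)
The plan is to read \eqref{eq:x-star-lin}--\eqref{eq:x-star-quad} as three instances of one convex calculus-of-variations problem and to pin down the minimizer via first-order (Euler--Lagrange) conditions, sidestepping any second-variation or field-of-extremals argument. The structural point is that in each of the three cases the running Lagrangian $L(x,p) = p^2 + \lambda(x)$ is jointly convex in $(x,p)$ (the three choices $\lambda(x) \in \{0,\, cx^2,\, cx\}$ are all convex and non-decreasing), and the admissible set $\{t \mapsto x_t \text{ absolutely continuous},\ x_0 = x,\ x_T = 0,\ \dot{x}_t \le 0\}$ is convex. Hence the objective functional $J[x(\cdot)] := \int_0^T (\dot{x}_t^2 + \lambda(x_t))\,dt$ is convex, and it suffices to exhibit a candidate satisfying the stationarity conditions --- the Euler--Lagrange equation $2\ddot{x}_t = \lambda'(x_t)$ on any open interval where the monotonicity constraint is slack, together with complementary-slackness/transversality conditions wherever $\dot{x}_t = 0$. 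Any such candidate is then automatically the global minimizer.

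For $\lambda \equiv 0$ the Euler--Lagrange equation is $\ddot{x}_t = 0$, so the extremal is the affine interpolation $x^{ML}_t = x(T-t)/T$; it is manifestly non-increasing, the constraint never binds, and $\cI^{ML}(T,x) = \int_0^T (x/T)^2\,dt = x^2/T$. For $\lambda(x) = cx^2$ the Euler--Lagrange equation is $\ddot{x}_t = c\,x_t$, whose general solution I would write in the translated basis $\{\sinh(\sqrt{c}(T-t)),\ \cosh(\sqrt{c}(T-t))\}$; the condition $x_T = 0$ kills the $\cosh$ term and $x_0 = x$ fixes the coefficient, giving $x^{MH}_t = x\,\sinh(\sqrt{c}(T-t))/\sinh(\sqrt{c}T)$, which is strictly decreasing on $[0,T]$, so again monotonicity is automatic. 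Rather than integrating brute-force, I would compute the cost by integrating by parts against the extremal equation: $\int_0^T (\dot{x}_t^2 + c\,x_t^2)\,dt = [x_t\dot{x}_t]_0^T - \int_0^T x_t(\ddot{x}_t - c\,x_t)\,dt = -x\,\dot{x}^{MH}_0 = \sqrt{c}\,x^2\coth(\sqrt{c}T)$, and then read off $\alpha^{MH} = -\dot{x}^{MH}$ to recover \eqref{eq:x-star-hyp}.

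The case $\lambda(x) = cx$ is the only one where the monotonicity constraint genuinely shapes the solution, and this is where the real work lies. The Euler--Lagrange equation is $\ddot{x}_t = c/2$, so any interior arc is an upward-opening parabola $x_t = \tfrac{c}{4}t^2 + bt + a$. Imposing just $x_0 = x$, $x_T = 0$ forces $b = -(\tfrac{c T}{4} + \tfrac{x}{T})$, and the resulting trajectory satisfies $\dot{x}_T \le 0$ precisely when $T \le 2\sqrt{x}/\sqrt{c}$ --- so for short horizons the unconstrained extremal is already admissible and gives the first branch (with $\hat T = T$). For longer horizons the parabola would force $\dot{x}_t > 0$ near $T$, so the optimal trajectory must reach zero at some earlier time $\hat{T}$ and remain there, which is admissible and adds no cost since $\lambda(0) = 0$ and $\dot x \equiv 0$ thereafter. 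Then $\hat T$ and the arc on $[0,\hat T]$ are pinned down by the free-terminal-time transversality condition (the Hamiltonian $\dot x_{\hat T}^2 + \lambda(0)$ must vanish, i.e.\ $\dot{x}_{\hat{T}} = 0$) together with $x_{\hat{T}} = 0$: these give $x_t = \tfrac{c}{4}(t-\hat{T})^2$, and $x_0 = x$ yields $\hat{T} = 2\sqrt{x}/\sqrt{c}$, consistent with $\hat{T} = \min(T, 2\sqrt{x}/\sqrt{c})$. One checks complementary slackness on $(\hat{T},T)$, so convexity again certifies optimality, and substituting the parabola into $\int_0^{\hat{T}}(\dot{x}_t^2 + c\,x_t)\,dt$ and simplifying with $b\hat{T} = \tfrac{c\hat{T}^2}{4} + x$ produces $\cI^{MQ}(T,x) = -\tfrac{c^2\hat{T}^3}{48} + \tfrac{c\hat{T}x}{2} + \tfrac{x^2}{\hat{T}}$, matching \eqref{eq:x-star-quad}. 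The main obstacle is exactly this free-boundary step: deriving the smooth-fit/transversality condition correctly and verifying that pasting a constant-zero tail beyond $\hat T$ is genuinely optimal, rather than, say, stretching a truncated non-monotone arc out to $T$; everything else reduces to routine ODE solving and integration.
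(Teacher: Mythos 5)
Your proposal is correct and follows essentially the same route as the paper's proof: Euler--Lagrange extremals in each of the three cases, with the monotonicity constraint in the $\lambda(x)=cx$ case handled by pasting the parabolic arc onto the boundary $x=0$ at the tangency point, yielding $\hat{T}=\min(T,2\sqrt{x}/\sqrt{c})$ exactly as the paper does. The paper's own argument is terser---it dismisses $\lambda=0$ as trivial and cites a reference for $\lambda(x)=cx^2$---so your additions (convexity of the functional to certify that the first-order candidates are global minimizers, and the integration-by-parts evaluation giving $\cI^{MH}=\sqrt{c}\,x^2\coth(\sqrt{c}T)$) merely fill in details the paper leaves implicit, and do not constitute a different method.
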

The superscripts ${ML}, {MQ}, {MH}$ respectively stand for the Myopic Linear, Quadratic and Hyperbolic models.
The first case ML yields linear selling and the TWAP strategy, or if time is parametrized in volume time, the classic VWAP trading strategy. The second strategy $x^{MH}_t$ and its corresponding rate $\alpha_t^{MH}$ represents exponential selling, and is the optimal strategy presented in the original Almgren-Chriss model \cite{almgren2001optimal}. This risk term results from the trader's effort to minimize the variance of liquidation cost.  From the perspective of an inventory risk measure, one natural alternative is $\lambda(x_s) = c x_s$, which has the attractive property of being proportional to value-at-risk and results in a selling strategy $x^{MQ}_t$ that is quadratic in $t$.  Yet as explained for a similar problem in \cite{gatheral2011optimal}, buying might result with position size small relative to $T$. Imposing the
constraint that $x^{MQ}_t$ is decreasing then leads to the modified solution \eqref{eq:x-star-quad} which in the case $\hat{T} < T$ causes liquidation to end prior to the terminal time $T$.

  \begin{figure}[ht]

  \centering
 \includegraphics[scale=.55, trim=0.1in 0.1in 0.1in 0.2in]{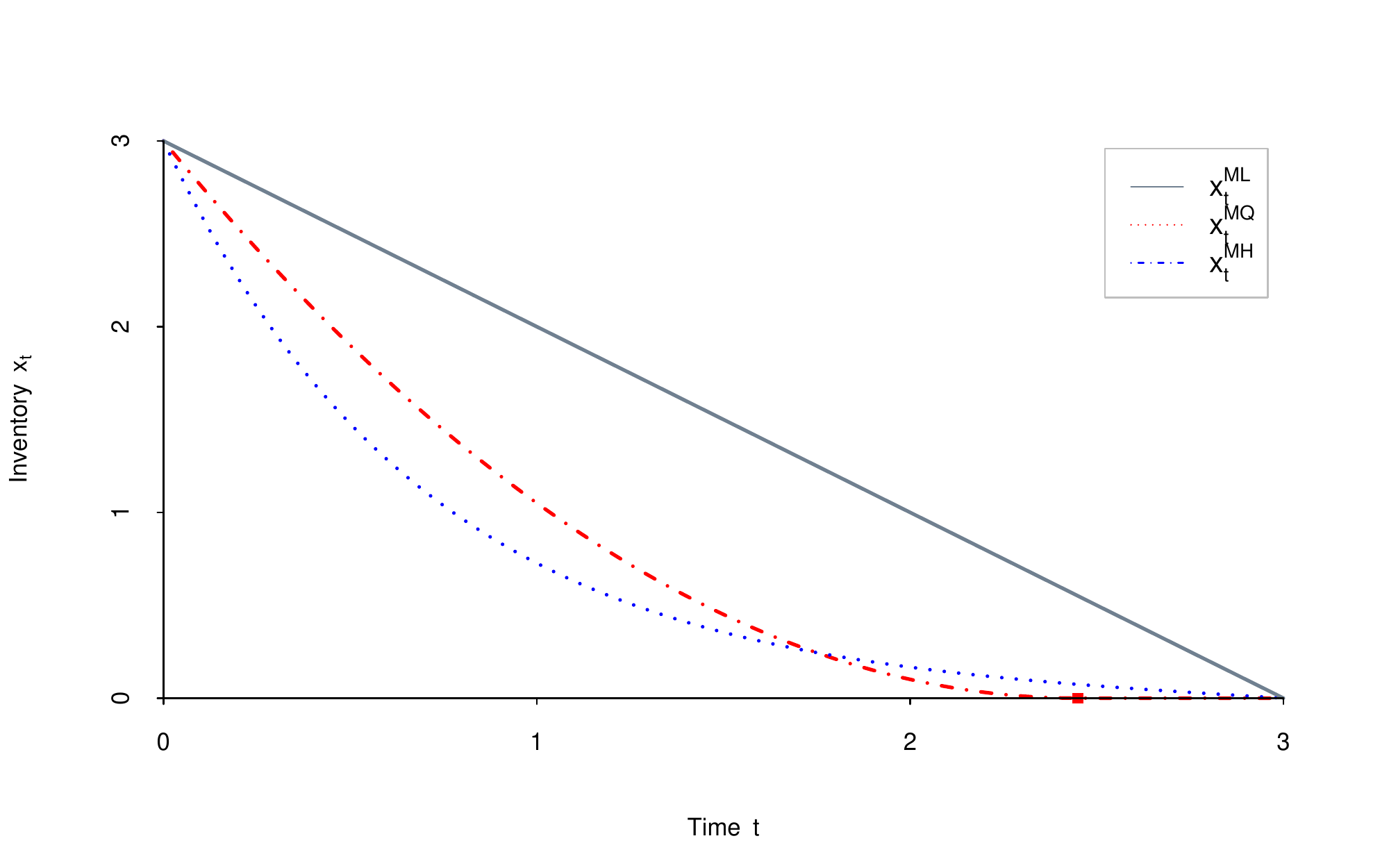}
 \caption{Optimal trajectories from Lemma \ref{lem:exec-curve}. Figure drawn for initial inventory $x=3$, horizon $T=3$, and $c=2$. This leads to $\hat{T}= \frac{2\sqrt{x}}{\sqrt{c}}=2.45$ in the quadratic scenario $x^{MQ}$ of \eqref{eq:x-star-quad}.\label{fig:exec-curves}}
  \end{figure}

Figure \ref{fig:exec-curves} details  the three execution curves described in Lemma \ref{lem:exec-curve} for $x = 3$ and $T=3$. Compared to the VWAP strategy $x^{ML}$, the non-zero inventory risk terms in strategies $x^{MH}$ and $x^{MQ}$ lead to higher rates of selling initially. The execution rate in $x^{MH}$ is proportional to $x$, cf. \eqref{eq:x-star-hyp} and liquidation occurs exactly at $T$. In contrast, as $x^{MQ}_t$ becomes small, the linear risk term $c x^{MQ}_t$ becomes more punitive and it may be optimal to end liquidation prior to time $T$.  Figure \ref{fig:exec-curves} shows inventory $x^{MQ}_t$ reaching $0$ at time $\hat{T}=\sqrt{6}$ as defined in \eqref{eq:x-star-quad}.

 We now turn our attention to the expected execution costs which arise from the order flow imbalance.
   Fixing $(\alpha^*_t)$, we can  view the corresponding information impact $\phi(\alpha^*_t)$ also as a deterministic function of $t$, allowing direct evaluation of the second term in \eqref{simp} using the explicitly available Gaussian distribution of  $Y_t$.

 \begin{lemma}\label{cor:expect}
 Given a deterministic, time-dependent flow impact $\phi(\alpha_t) = \phi_t$, and $Y_0=y$, $Y_t$ has the second moment
 \begin{equation}\label{eq:expY2}
 \E_y \left[ Y_t^2 \right]  = \mu^2_t+\sigma^2_t,
 \end{equation} where
 \begin{equation}\label{eq:params} \left\{
 \begin{aligned}
 \mu_t &=y e^{-\beta t}-\int_{0}^{t} e^{-\beta(t-s)} \phi_s \, ds, \\
 \sigma^2_t &=\frac{\sigma^2}{2\beta}(1-e^{-2\beta t}).
 \end{aligned} \right.
 \end{equation}
 \end{lemma}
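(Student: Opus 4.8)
The plan is to solve the linear SDE \eqref{dyn} explicitly with the deterministic impact $\phi_t$, extract the mean and variance of the Gaussian law of $Y_t$, and then use $\E_y[Y_t^2]=(\E_y[Y_t])^2+\mathrm{Var}_y(Y_t)$. First I would apply the integrating factor $e^{\beta t}$ to \eqref{dyn}: writing $d(e^{\beta t}Y_t) = e^{\beta t}(-\phi_t\,dt + \sigma\,dW_t)$ and integrating from $0$ to $t$ gives
\begin{equation*}
Y_t = y e^{-\beta t} - \int_0^t e^{-\beta(t-s)}\phi_s\,ds + \sigma\int_0^t e^{-\beta(t-s)}\,dW_s.
\end{equation*}
This matches the representation $Y_t = Y_t^0 + \int_0^t e^{\beta(s-t)}\phi(\dot x_s)\,ds$ already recorded after \eqref{dyn}, with $Y_t^0$ the OU process from \eqref{eq:y0}. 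The first two (deterministic) terms are exactly $\mu_t$ as defined in \eqref{eq:params}, so $\E_y[Y_t]=\mu_t$ follows once we note the Itô integral has zero mean.

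Next I would compute the variance, which is the variance of the stochastic term $\sigma\int_0^t e^{-\beta(t-s)}\,dW_s$ and hence independent of both $y$ and $\phi$. By the Itô isometry this equals $\sigma^2\int_0^t e^{-2\beta(t-s)}\,ds = \sigma^2 e^{-2\beta t}\int_0^t e^{2\beta s}\,ds = \frac{\sigma^2}{2\beta}(1-e^{-2\beta t})$, which is precisely $\sigma_t^2$ in \eqref{eq:params}. Since a linear SDE driven by Brownian motion with deterministic coefficients produces a Gaussian solution, $Y_t \sim \mathcal N(\mu_t,\sigma_t^2)$, and therefore $\E_y[Y_t^2] = \mu_t^2 + \sigma_t^2$, giving \eqref{eq:expY2}.

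There is essentially no serious obstacle here: the only mild care needed is (i) justifying that $\phi_t$ is regular enough — it suffices that $t\mapsto\phi_t$ be locally integrable, which holds for all admissible deterministic rates and in particular for the continuous myopic controls of Lemma \ref{lem:exec-curve} — so that the Lebesgue integral defining $\mu_t$ and the Itô integral are well defined, and (ii) observing that the variance does not depend on $y$ or $\phi$, which is what lets us reuse this formula later (e.g.\ for $u(t,0,y) = \E_{0,y}[\int_0^t \kappa (Y_s^0)^2\,ds]$ by setting $\phi\equiv 0$). The remainder is the routine integration displayed above.
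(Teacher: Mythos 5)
Your proof is correct and follows exactly the route the paper indicates: the paper simply remarks that the lemma ``can be verified by directly solving the SDE \eqref{dyn} and recalling the Gaussian distribution of $Y_t$,'' which is precisely your integrating-factor solution plus the It\^o isometry computation. Your write-up just supplies the routine details the authors omit.
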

 Lemma \ref{cor:expect} can be verified by directly solving the SDE \eqref{dyn} and recalling the Gaussian distribution of $Y_t$.


 Putting everything together we obtain the expected total cost for the family of myopic execution strategies in Proposition \ref{lemma1}.  We reiterate that while realized costs instantaneously depend on the stochastic process $(Y_t)$, strategies in this section are purely deterministic and do not adapt to $(Y_t)$. The solutions below are labeled according to the form of $\lambda(x)$; while formally the two terms in \eqref{simp} are decoupled, it is of course logical to match the resulting solution $\cI$ of the instantaneous price-impact part with the corresponding expectation of $\E_y[ \int_0^T Y^2_s \, ds]$, which is the convention we follow in Proposition \ref{lemma1}.

 \begin{proposition}\label{lemma1}
 Suppose that $\phi_t(\alpha) = \phi_t$.
 The corresponding value function $u$ is given by
 \begin{equation}\label{eq:generalV}
 u^M(T,x,y)=\cI(T,x) + \mathcal{O}(T,x,y),
 \end{equation}
 where $\cI$ is defined in \eqref{eq:x-star-lin}-\eqref{eq:x-star-quad} and
 $\mathcal{O} = \displaystyle{  \kappa \int_{0}^{T} (\mu^2_t + \sigma^2_t) \, dt }$ from Lemma \ref{cor:expect} is
 %
 \begin{align}\label{eq:val1}
 \cO^{0}(T,x,y) & =\frac{\kappa y^2}{2\beta} \left(1-e^{-2\beta T}\right) + \frac{\kappa \sigma^2}{4\beta^2}\left(
  2\beta T+e^{-2\beta T}-1\right) \quad\text{if } \phi_t \equiv 0; \\
\label{eq:val2}
 \cO^{ML}(T,x,y)& =\cO^{0}(T,x,y) + \frac{ \kappa \eta x y}{\beta^2 T}\left(2e^{-\beta T}-1-e^{-2\beta T}\right) \\ \notag
  & \quad + \frac{\kappa \eta^2 x^2 }{2\beta^3 T^2}\left(2\beta T+4e^{-\beta T}-e^{-2\beta T} -3\right) \qquad\text{if }\phi_t = \eta\alpha^{ML}_t
 \end{align}
Closed form expressions are also available for $\cO^{MQ}$ and $\cO^{MH}$, see Appendix \ref{App:AppendixB}.
 \end{proposition}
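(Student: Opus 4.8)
The plan is to combine the already-established decoupling in \eqref{simp} with the explicit moment formula from Lemma \ref{cor:expect}. Since $\phi_t$ does not depend on $\alpha$, the value function splits as $u^M(T,x,y) = \cI(T,x) + \cO(T,x,y)$ where $\cI$ is supplied by Lemma \ref{lem:exec-curve} for each choice of $\lambda$ and $\cO = \kappa\int_0^T \E_y[Y_s^2]\,ds = \kappa\int_0^T(\mu_s^2 + \sigma_s^2)\,ds$. So the entire content of the Proposition reduces to evaluating this integral in the two cases $\phi_t\equiv 0$ and $\phi_t = \eta\alpha^{ML}_t = \eta x / T$ (constant in $t$), with the $\cO^{MQ}, \cO^{MH}$ cases relegated to the appendix.

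First I would handle $\cO^0$. Here $\mu_t = y e^{-\beta t}$ and $\sigma_t^2 = \frac{\sigma^2}{2\beta}(1-e^{-2\beta t})$ from \eqref{eq:params}. Then $\int_0^T \mu_t^2\,dt = y^2\int_0^T e^{-2\beta t}\,dt = \frac{y^2}{2\beta}(1-e^{-2\beta T})$, and $\int_0^T \sigma_t^2\,dt = \frac{\sigma^2}{2\beta}\bigl(T - \frac{1}{2\beta}(1-e^{-2\beta T})\bigr) = \frac{\sigma^2}{4\beta^2}(2\beta T + e^{-2\beta T} - 1)$. Multiplying by $\kappa$ gives \eqref{eq:val1} directly.

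Next, for $\cO^{ML}$ I would use $\phi_s = \eta x/T$ (constant), so from \eqref{eq:params}, $\int_0^t e^{-\beta(t-s)}\phi_s\,ds = \frac{\eta x}{\beta T}(1 - e^{-\beta t})$ and hence $\mu_t = y e^{-\beta t} - \frac{\eta x}{\beta T}(1-e^{-\beta t})$. Squaring yields three groups of terms: the pure $y^2 e^{-2\beta t}$ term (already accounted for in $\cO^0$), a cross term $-\frac{2\eta x y}{\beta T} e^{-\beta t}(1-e^{-\beta t})$, and a pure footprint term $\frac{\eta^2 x^2}{\beta^2 T^2}(1-e^{-\beta t})^2$. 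Since $\sigma_t^2$ is unchanged, $\cO^{ML} = \cO^0 + \kappa\int_0^T\bigl[\text{cross} + \text{footprint}\bigr]dt$. The cross integral is $-\frac{2\eta x y}{\beta T}\int_0^T(e^{-\beta t} - e^{-2\beta t})\,dt = -\frac{2\eta x y}{\beta T}\bigl(\frac{1-e^{-\beta T}}{\beta} - \frac{1-e^{-2\beta T}}{2\beta}\bigr)$, which simplifies to $\frac{\eta x y}{\beta^2 T}(2e^{-\beta T} - 1 - e^{-2\beta T})$ after combining; multiplying by $\kappa$ gives the second term of \eqref{eq:val2}. The footprint integral is $\frac{\eta^2 x^2}{\beta^2 T^2}\int_0^T(1 - 2e^{-\beta t} + e^{-2\beta t})\,dt = \frac{\eta^2 x^2}{\beta^2 T^2}\bigl(T - \frac{2}{\beta}(1-e^{-\beta T}) + \frac{1}{2\beta}(1-e^{-2\beta T})\bigr)$; collecting over a common denominator of $2\beta$ produces $\frac{\eta^2 x^2}{2\beta^3 T^2}(2\beta T + 4e^{-\beta T} - e^{-2\beta T} - 3)$, and multiplying by $\kappa$ gives the last term of \eqref{eq:val2}.

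The steps are all elementary integration of exponentials, so there is no real obstacle beyond bookkeeping; the one place to be careful is the algebraic consolidation of the cross term and the footprint term into the compact forms displayed in \eqref{eq:val2}, where one must correctly combine the $1$, $e^{-\beta T}$, and $e^{-2\beta T}$ coefficients over the common denominator and verify the constant $-3$ (resp. $-1$) offsets. For $\cO^{MQ}$ and $\cO^{MH}$ the same scheme applies but with $\phi_s = \eta\alpha^{MQ}_s$ (piecewise quadratic in $s$, vanishing past $\hat T$) or $\phi_s = \eta\alpha^{MH}_s$ (a hyperbolic cosine); the convolution $\int_0^t e^{-\beta(t-s)}\phi_s\,ds$ is still closed-form — for $\alpha^{MH}$ it produces a combination of $e^{\pm\sqrt c\,(T-t)}$ and $e^{-\beta t}$ terms, integrable in closed form as long as $\beta \neq \sqrt c$ — so these are deferred to Appendix \ref{App:AppendixB}.
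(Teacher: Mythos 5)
Your proof is correct and takes essentially the same route as the paper's own argument in Appendix \ref{App:AppendixB}: both rely on the decoupling in \eqref{simp} together with Lemma \ref{cor:expect}, compute the convolution $A_t = \int_0^t e^{-\beta(t-s)}\phi_s\,ds$ for each strategy, and then integrate the resulting exponentials term by term. If anything, you carry out the $\cO^{ML}$ integration more explicitly than the paper, which only records the $A_t$ expressions and declares the remaining integration ``straightforward but tedious.''
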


Thus, the overall cost of liquidation has two components: the $\cI(T,x)$ term that depends only on $(T,x)$, and the informational footprint term $\cO(T,x,y)$ that also depends on $y$.  In terms of $x$, $\cO$ is constant if $\phi_t=0$, linear if $\phi_t = \eta\alpha^{ML}$, and quadratic otherwise.  As a function of $y$, $\cO$ is quadratic thanks to the linear dynamics of $(Y_t)$ and quadratic informational cost. Financially, the $y^2$ term represents higher costs due to trading in an unbalanced market, while the $y$-term adjusts to the fact that selling in a market dominated by buyers is favorable to competing with other sellers for scarce liquidity. The following Corollary shows that the ``best'' level of $y$ is positive (or $0$ if $\phi_t=0$). Intuitively, it is best to begin trading in an environment with positive order flow so that the trader's selling activity pushes the order imbalance towards $0$ and reduces informational costs.

\begin{corollary}\label{ymin}
 Suppose that $\phi_t(\alpha) = \phi_t \ge 0$.  Then the flow imbalance that minimizes expected execution cost is non-negative, $\arg\min_y\left\{ u(T,x,y) \right\} \geq 0$ for any $T,x$.
 \end{corollary}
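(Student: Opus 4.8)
The plan is to read the optimal $y$ directly off the closed-form cost in Proposition~\ref{lemma1}. Since $u^M(T,x,y)=\cI(T,x)+\cO(T,x,y)$ and $\cI(T,x)$ does not depend on $y$, it suffices to minimize $\cO(T,x,y)=\kappa\int_0^T(\mu_t^2+\sigma_t^2)\,dt$ over $y\in\R$. By Lemma~\ref{cor:expect}, $\mu_t=y e^{-\beta t}-b_t$ where $b_t:=\int_0^t e^{-\beta(t-s)}\phi_s\,ds$; the hypothesis $\phi_s\ge 0$ forces $b_t\ge 0$ for every $t\in[0,T]$, and $\sigma_t^2$ is independent of $y$.

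The first observation is that $y\mapsto\cO(T,x,y)$ is a strictly convex quadratic whenever $\kappa>0$: expanding $\mu_t^2$,
\begin{equation*}
\cO(T,x,y)=\kappa\Big(Ay^2-2By+C\Big),\qquad A:=\int_0^T e^{-2\beta t}\,dt>0,\quad B:=\int_0^T e^{-\beta t}b_t\,dt,\quad C:=\int_0^T(b_t^2+\sigma_t^2)\,dt .
\end{equation*}
All three integrals are finite since the integrands are bounded on $[0,T]$. The unique minimizer is $y^\ast=B/A$, and since the integrand $e^{-\beta t}b_t$ defining $B$ is nonnegative we get $B\ge 0$, hence $y^\ast\ge 0$. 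Moreover $B=0$ exactly when $b_t\equiv 0$, i.e.\ when $\phi\equiv 0$, giving $y^\ast=0$ in that case, which matches the parenthetical claim. (If $\kappa=0$ the cost does not depend on $y$ and the statement is vacuous, so one assumes $\kappa>0$.)

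A slicker variant sidesteps even the first-order condition: for any $y<0$ and every $t$ one has $\mu_t^2=(|y|e^{-\beta t}+b_t)^2\ge b_t^2$, which is precisely $\mu_t^2$ evaluated at $y=0$; integrating in $t$ gives $\cO(T,x,y)\ge\cO(T,x,0)$, so no negative $y$ can be a minimizer and $\arg\min_y u(T,x,y)\ge 0$. There is essentially no analytic obstacle on either route; the only mild subtlety is the degenerate case $\kappa=0$ noted above. I would present the explicit-quadratic argument, since it also delivers the exact optimal level $y^\ast=B/A$, which is of independent interest when optimizing over $T$ in Section~\ref{sec:optimal-T}.
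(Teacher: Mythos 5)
Your proposal is correct and follows essentially the same route as the paper: both write $\cO(T,x,y)=\kappa\int_0^T\bigl(ye^{-\beta t}-A_t\bigr)^2+\sigma_t^2\,dt$ with $A_t\ge 0$, observe that the coefficient of $y^2$ is $\kappa\int_0^Te^{-2\beta t}\,dt>0$ while the coefficient of $y$ is non-positive, and conclude the vertex of the quadratic is non-negative. Your added pointwise-comparison variant and the remark that the claim degenerates when $\kappa=0$ (a case the paper silently ignores) are small but genuine improvements in precision.
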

\begin{proof}
As already discussed, $u^{M}(T,x,y)$ is quadratic in $y$ and the coefficient of $y^2$ is always as in \eqref{eq:val1}. By inspection it is positive. The dependence on $y$ comes from the $\cO$ terms that are of the form
$$
\cO(T,x,y) = \kappa \int_0^T (y e^{-\beta t} - A_t)^2 + \sigma_t^2 \,dt
$$
where $A_t \ge 0$ (strictly positive as soon as $\phi_t > 0$ on an interval of positive measure, cf.~\eqref{eq:def-A}). It follows that the coefficient of $y^2$ in $u^{M}(T,x,y)$ is $\kappa \int_0^T e^{-2\beta t} dt > 0$ and of $y$ is $ -\int_0^T 2 \kappa e^{-\beta t}A_t dt \le 0$.
%
Thus, setting $\partial_y u^{M}(T,x,y)=0$ and solving for $y$ yields a non-negative result.
\end{proof}

 \subsection{Dynamic Execution Strategies}\label{sec:dynamic}
 We now return to the problem in \eqref{vfun}, letting $\phi(\alpha_t) = \eta \alpha_t$.  The optimal dynamic strategy is adapted to the order flow imbalance process $(Y_t)$ and the trader's rate of selling directly influences $(Y_t)$.  To avoid confusion we will denote dynamic strategies and expected costs by $\alpha^D_t$ and $u^D$, respectively.
 The HJB PDE for $u^D(T,x,y)$ is
 \begin{equation}\label{fullpde}
 u^D_T=\frac{1}{2}\sigma^2 u^D_{yy}-\beta y u^D_y+\kappa y^2 + \lambda(x)+\inf_{\alpha \ge 0} \left\lbrace g(\alpha) -\alpha u^D_x-\eta \alpha u^D_y\right\rbrace,
 \end{equation}
 with $u^D(0,x,y) = +\infty$ unless $x=0$.  Note that \eqref{fullpde} is identical to \eqref{hjb V} but for the time derivative on the left hand side of the equation which is introduced due to the time-dependence arising from the constrained horizon $T$.  Assuming $g(\alpha) = \alpha^2, \lambda(x)=cx^2$,  and inserting the feedback control as in \eqref{eq:hjb2} yields a semi-linear, parabolic PDE
 \begin{equation}\label{eq:fpde2}
 u^D_T=\frac{1}{2}\sigma^2 u^D_{yy}-\beta y u^D_y+\kappa y^2 + c x^2-\left(\frac{u^D_{x}+\eta u^D_y}{2}\right)^2.
 \end{equation}
 with initial condition \eqref{eq:singular}.  To obtain \eqref{eq:fpde2}, we let $\alpha$ be unconstrained and allowed to become negative.  This allows us to find the following candidate solution by exploiting the linear-quadratic structure. The motivation  comes from $u^{M}$ in Proposition \ref{lemma1} where we find a similar result: quadratic in $x$ and $y$ with an $xy$ term that adds additional costs when $y<0$.  
 \begin{proposition}\label{lem:riccati}
The solution of \eqref{eq:fpde2} has the form
 \begin{equation}\label{eq:riccati}
 u^{DH}(T,x,y)=x^2 A(T)+y^2 B(T)+x y C(T)+ x D(T) + y E(T) + F(T),
 \end{equation}
 where    $D(T)=E(T)\equiv 0$, $A,B,C,F$ solve the matrix Riccati ordinary differential equations (ODEs)
 \begin{equation}\label{eq:syste}
 \begin{cases} A^{\prime}(T) &= -A^2-\eta AC-\frac{\eta^2 }{4}C^2+c\\
 B^{\prime}(T) &=-\eta^2 B^2-B(\eta C+2\beta)+\kappa-\frac{1}{4}C^2\\
 C^{\prime}(T) &= -\frac{\eta}{2}C^2-C(\eta ^2 B+A+\beta)-2\eta AB\\
 F^{\prime}(T) &=  \sigma^2 B,\end{cases}
  \end{equation}
and we have the following initial conditions
  \begin{equation}\label{eq:cond}
  \begin{cases} \lim\limits_{T \downarrow 0} A(T)= +\infty \\
  B(0) =   C(0) =  F(0) = 0. \end{cases}
   \end{equation}
 The optimal rate of liquidation is
  \begin{equation}\label{eq:a-star3}
  {\alpha}_t^{DH}(T-t,x_t,Y_t)= \frac{x_t(2A(T-t)+\eta C(T-t))+Y_t(C(T-t)+2\eta B(T-t))}{2}.
  \end{equation}
 \end{proposition}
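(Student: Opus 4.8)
The plan is to verify the quadratic ansatz \eqref{eq:riccati} by substituting it into the semilinear PDE \eqref{eq:fpde2}, showing that this reduces the PDE to the ODE system \eqref{eq:syste} once coefficients are matched monomial by monomial, and then to read off the feedback rate from the minimizer \eqref{eq:alpha}. It is cleanest to package the quadratic part: with $z=(x,y)^{\top}$, $v=(1,\eta)^{\top}$ and $M(T)=\begin{pmatrix}A(T)&C(T)/2\\ C(T)/2&B(T)\end{pmatrix}$, the ansatz reads $u^{DH}=z^{\top}M(T)z+D(T)x+E(T)y+F(T)$, so that $\nabla_z u^{DH}=2Mz+(D,E)^{\top}$, $u^{DH}_{yy}=2B$, and $u^{DH}_x+\eta u^{DH}_y=2v^{\top}Mz+(D+\eta E)$. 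Substituting into \eqref{eq:fpde2} and separating the coefficient of $z^{\top}(\cdot)z$, the coefficient of $z$, and the constant, the PDE becomes: the matrix Riccati equation $M'=Q-\beta(e_2e_2^{\top}M+Me_2e_2^{\top})-Mvv^{\top}M$ with $Q=\mathrm{diag}(c,\kappa)$, whose three scalar entries are exactly the first three lines of \eqref{eq:syste}; a linear homogeneous first-order system for $(D,E)$; and $F'=\sigma^2 B$, the last line of \eqref{eq:syste}.

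From here $D\equiv E\equiv 0$ is immediate: the only inhomogeneous source in \eqref{eq:fpde2} is the pure quadratic $cx^2+\kappa y^2$ and the feedback term is a perfect square in $\nabla u^{DH}$, so no linear-in-$(x,y)$ terms are generated; since $(D,E)$ also starts at $(0,0)$ it stays there. The remaining initial data \eqref{eq:cond} is what makes the ansatz consistent with the singular condition \eqref{eq:singular}. Restricting to $x=0$, \eqref{eq:singular} (equivalently the absorbing-boundary value $\E_{0,y}[\int_0^T\kappa(Y^0_s)^2\,ds]=y^2B(T)+F(T)$, cf. $\cO^0$ in Proposition \ref{lemma1}) forces $B(0)=F(0)=0$; the forced-liquidation penalty for $x\neq0$ forces $A(T)\to+\infty$, and the dominant balance $A'\approx-A^2$ gives the rate $A(T)\sim 1/T$; $C(0)=0$ then follows from the leading-order analysis of the $C$-equation (with $A\sim 1/T$, $B\sim\kappa T$ one gets $C\sim-\eta\kappa T\to0$), or more transparently by inheriting the initial data $M_\varepsilon(0)=\mathrm{diag}(1/\varepsilon,0)$ from the finite-penalty approximations described next.

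The step I expect to carry the real weight is establishing that the system \eqref{eq:syste}--\eqref{eq:cond} is well posed, i.e.\ that a solution exists on all of $T\in(0,\infty)$ with no finite escape time, and that the singular initial condition is attained. Two complementary routes: (i) linearize the matrix Riccati equation, writing $M=YX^{-1}$ where $(X,Y)$ solve the associated linear Hamiltonian ODE with regular data (the singular $A(0^+)=\infty$ corresponds to $X(0)$ having the rank-one form $\mathrm{diag}(0,1)$), so that $M$ exists precisely as long as $X(T)$ is invertible; (ii) approximate by the finite-penalty problems carrying terminal cost $\tfrac1\varepsilon x_T^2$, whose Riccati solutions $M_\varepsilon$ with $M_\varepsilon(0)=\mathrm{diag}(1/\varepsilon,0)$ are classical, and pass to the limit $\varepsilon\downarrow0$ locally uniformly on $(0,\infty)$. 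In either case no blow-up can occur for $T<\infty$ because the underlying relaxed linear-quadratic liquidation problem has finite value there: the linear strategy $\alpha_t=x/T$ already yields cost $x^2/T+\tfrac13cTx^2+\kappa\int_0^T\E_y[Y_s^2]\,ds<\infty$ since $Y$ has bounded variance.

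Finally, the optimal rate: substituting \eqref{eq:riccati} with $D=E=0$ into the feedback minimizer \eqref{eq:alpha} at time-to-go $T-t$ and state $(x_t,Y_t)$ gives $\alpha^*=\tfrac12(u^{DH}_x+\eta u^{DH}_y)=\tfrac12\bigl(x_t(2A(T-t)+\eta C(T-t))+Y_t(C(T-t)+2\eta B(T-t))\bigr)$, which is \eqref{eq:a-star3}. To upgrade ``solves the PDE'' to ``is the value function $u^D$ of \eqref{vfun}'' one runs the usual verification argument: apply It\^o to $t\mapsto u^{DH}(T-t,x_t,Y_t)$ along an arbitrary admissible $(\alpha_t)$; \eqref{eq:fpde2} makes the drift nonnegative, so the process is a submartingale and a martingale under $\alpha^*$; taking expectations, letting $t\uparrow T$, and using the singular initial condition to control the terminal term yields $u^{DH}\le\E[\text{cost}]$ with equality along $\alpha^{DH}$. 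One caveat to flag: \eqref{eq:fpde2} was obtained by dropping the $\alpha\ge0$ constraint, so strictly speaking $u^{DH}$ is the value of the relaxed problem --- a lower bound for \eqref{vfun} that coincides with it exactly when $\alpha^{DH}$ stays nonnegative along the optimal trajectory.
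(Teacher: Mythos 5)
Your proof takes the same route as the paper's Appendix C: substitute the quadratic ansatz into \eqref{eq:fpde2}, match the coefficients of $x^2$, $y^2$, $xy$ and the constant to obtain \eqref{eq:syste}, note that $D\equiv E\equiv 0$ because their equations have no inhomogeneous source, and read off \eqref{eq:a-star3} from the feedback minimizer \eqref{eq:alpha}; your matrix packaging $M'=Q-\beta(e_2e_2^{\top}M+Me_2e_2^{\top})-Mvv^{\top}M$ reproduces the scalar system exactly. The extra material you supply --- deriving \eqref{eq:cond} from the singular and absorbing boundary data, global existence of the Riccati solution via the linearization $M=YX^{-1}$ or the finite-penalty approximation, the It\^o verification argument, and the caveat that \eqref{eq:fpde2} is the relaxed ($\alpha$ unconstrained) problem --- is correct and addresses points the paper's proof leaves implicit (it stops at the coefficient matching, and its displayed computation even carries harmless typos, $c^2x^2$ for $cx^2$ and $(\eta C)^2$ for $(\eta C/2)^2$, that your version gets right), but it is not part of the paper's argument.
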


See Appendix \ref{App:AppendixC} for proof of Proposition \ref{lem:riccati}.
 We reiterate that in \eqref{eq:syste} $A,B,C,F$ are functions of time remaining and that we have simplified the notation by omitting the time argument (i.e.~$A^\prime = A^\prime (T)$, etc.) on the right side of \eqref{eq:syste}.   Close to the deadline $T$, the impact from impacting $Y$ disappears, and \eqref{eq:fpde2} converges to the myopic linear case of \eqref{eq:x-star-lin}. This can be observed by formally linearizing the Riccati system \eqref{eq:syste} in the regime $T-t = \eps$
and using the initial conditions \eqref{eq:cond}. We obtain the following expansions in $\eps$:
 \begin{equation}\label{eq:shortsyste}
  \begin{cases} A(\epsilon)= \frac{1}{\epsilon} + O(\eps)\\
  B(\epsilon)=\kappa \epsilon + O(\eps^2); \\
  C(\epsilon) = -\eta \kappa \epsilon + O(\eps^2); \\
  F(\epsilon)= \frac{\sigma^2 \kappa \epsilon^2}{2} + O(\eps^3).\end{cases}
   \end{equation}
Inserting into \eqref{eq:a-star3} gives the short term trading rate $\alpha^{DH}(\epsilon,x_t,Y_t)=\frac{x_t}{\epsilon} +O(\epsilon)$. This heuristically confirms that the strategy \eqref{eq:a-star3} is admissible which can also be observed in Figure \ref{fig:rates-comp} below: as $t \to T$, the dynamic trading rate stabilizes, resembling a VWAP strategy.


 \begin{remark}\label{rem:linear-riccati}
{It is also possible to set-up and solve linear quadratic problems for other functional forms of inventory risk $\lambda(x)$. In the linear case  $\lambda(x) = c x$, the resulting Riccati system will have non-zero linear coefficients $D(T)$ and $E(T)$ of $x$ and $y$. However, dynamically satisfying the constraint $x \geq 0$ is not tractable (cf.~$\hat{T}$ in \eqref{eq:x-star-lin}) and we found that the resulting unconstrained strategies tend to lead to wild buying-and-selling.

 In the constant case $\lambda(x)=c$ the Riccati system is almost the same as \eqref{eq:syste} (again $E(T) = D(T) \equiv 0$) except that the $c$ term moves to the fourth line: $F_{DL}'(T) =  \sigma^2 B(T) + c$. In the sequel we use both of these solutions for the numerical illustrations.}
%
 \end{remark}


 The equations in \eqref{eq:syste} can be dealt with using a software package such as \texttt{R}.  It is however necessary to replace the singular initial condition \eqref{eq:singular} with the condition
 \begin{equation}\label{eq:M}
 \lim\limits_{T\downarrow 0} u^D(T,Y,x)=\begin{cases}
 0 & \text{if }x=0 \\
 M & \text{if }x\neq 0
 \end{cases}
 \end{equation}
 for a constant $M$ large, essentially allowing for a non-zero position at time $T$ which must then be liquidated in a single order at some additional cost. This is equivalent to introducing a boundary layer $[0,\eps]$ and solving on $T \in [\eps, \infty]$ whereupon $M=1/\eps$ is the right choice based on \eqref{eq:shortsyste}.

 The optimal trading rate $\alpha^D$ in \eqref{eq:a-star3} is linear in both $x_t$ and $Y_t$. The former feature is similar to the hyperbolic situation in \eqref{eq:x-star-hyp} where $\alpha^{MH}_t$ is also linear in $x_t$.  We next illustrate how the dynamic strategy $\alpha_t^D$ compares to its myopic counterpart $\alpha_t^{M}$.  With fixed terminal time $T$, the incentive for the trader to speed up or slow down under strategy $\alpha_t^D$ arises from the trader's desire for more balanced order flow.  Note that there is no incentive to accelerate one's trading in order to complete the task and exit the market prior to time $T$ since costs from $Y$ accrue until $T$.  For positive order flow imbalance, trading more quickly in the present results in lower execution costs in the future because $(Y_t)$ will be closer to $0$ as a result of his activity. Likewise, if order flow imbalance is negative, it is better to reduce trading speed so as not to pull $(Y_t)$ further from $0$.  For negative $Y$ and large enough $T$ (or large enough $\kappa$, $\beta$), $\alpha_t^D$ may become negative (i.e.~it may be optimal to begin buying), however this happens only under extreme parameters.


 Figure \ref{fig:dyn-vs-myo} illustrates the results of Proposition \ref{lem:riccati} for a simulated path of $(Y^0_t)$ comparing the myopic $\alpha^{MH}$ versus the adaptive $\alpha^{DH}$. As can be observed, both strategies have a broadly similar shape, with $\alpha^D$ ``fluctuating'' around $\alpha^{MH}$. We also observe that $\alpha^{D}$ is less aggressive initially, starting out slower and then speeding up (relative to $\alpha^{MH}$) after $t > 1.5$.

   \begin{figure}[ht]
   \centering
  \includegraphics[height=2.5in, trim=0.1in 0.2in 0.1in 0.1in]{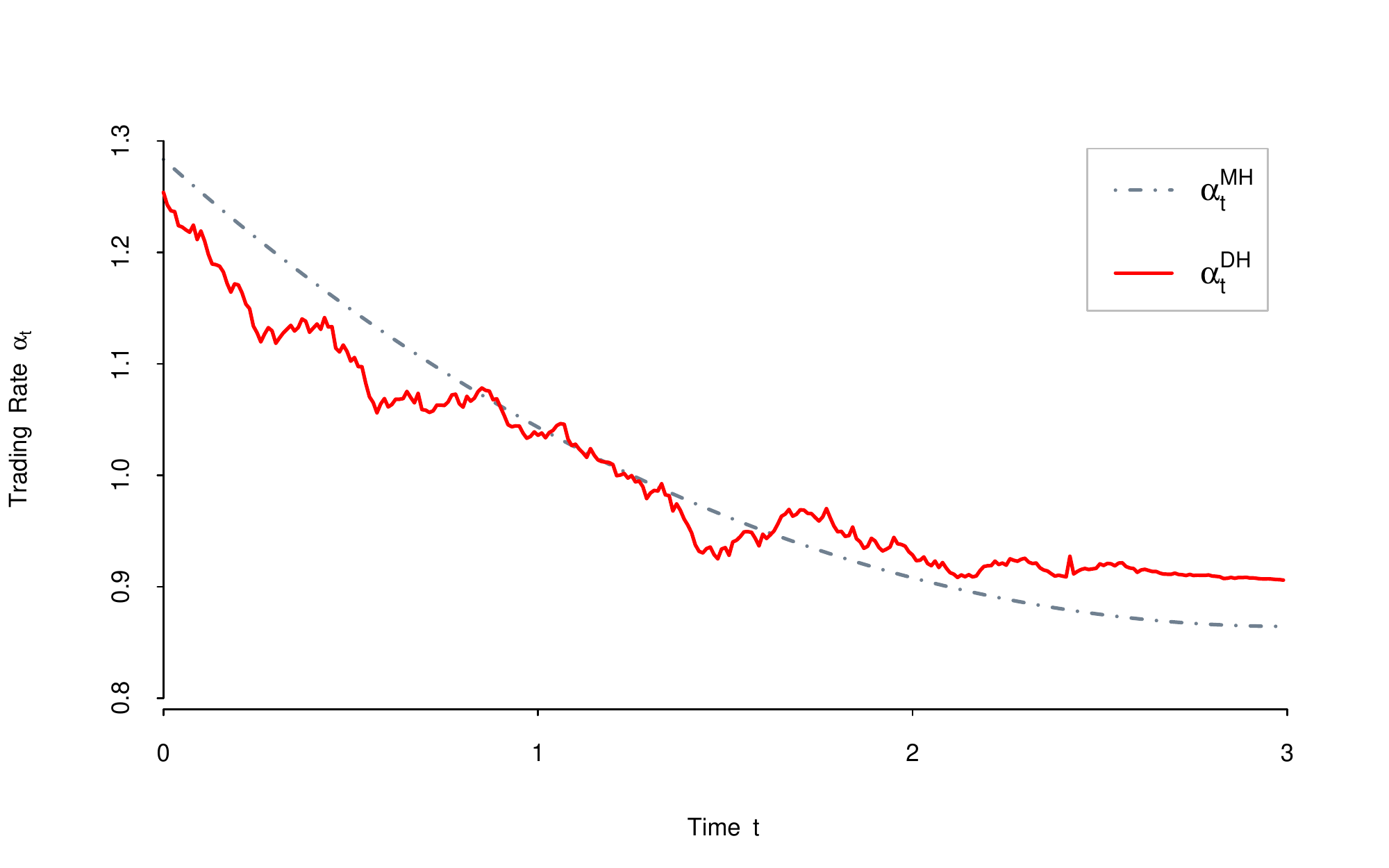}
  \caption{Trading rates $\alpha_t^{DH}$ and $\alpha_t^{MH}$ for a sample simulated path of $(Y_t)$.  The figure is drawn for parameter values $T=3$, $\kappa=10$, $\sigma=.14$, $\beta=.05$, $\eta=.05$, $\lambda(x)=0.1 x^2$, and initial condition $x_0 =3, Y_0=0$. \label{fig:dyn-vs-myo}}
   \end{figure}

 \section{Optimizing Execution Horizon}\label{sec:optimal-T}
 We now move to the second step of the approximate solution scheme and remove the fixed horizon constraint.  Given $u(T,x,y)$, define
 \begin{equation*}
 T^* := \arg\min_T\ u(T,x,y) .
 \end{equation*}
The next Lemma shows that  $T^*$ is finite in all the cases considered so far.

 \begin{lemma}\label{lem:T-dependence}
 For any fixed $x$, there exists $\bar{T}$ such that $\partial_T u(T,x,y) > 0$ for all $T > \bar{T}$ and all $y$.
 \end{lemma}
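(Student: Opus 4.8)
The plan is to differentiate the two explicit representations of $u$ obtained in the preceding sections: the closed forms of Proposition \ref{lemma1} (and Appendix \ref{App:AppendixB}) in the myopic cases ML, MH, MQ, and the Riccati representation \eqref{eq:riccati}--\eqref{eq:syste} of Proposition \ref{lem:riccati} (together with its $\lambda(x)=c$ variant from Remark \ref{rem:linear-riccati}) in the dynamic cases DH, DL. In every case $u$ splits into a price-impact/inventory-risk piece that \emph{saturates} as $T\to\infty$ and an information-cost piece that \emph{accumulates} at the stationary rate $\kappa\sigma^2/(2\beta)>0$; the lemma records that the second eventually dominates, so that $\partial_T u$ turns positive.

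\textbf{Myopic family.} Write $u^{M}(T,x,y)=\cI(T,x)+\cO(T,x,y)$ as in \eqref{eq:generalV}. Differentiating \eqref{eq:x-star-lin}--\eqref{eq:x-star-quad} gives $\partial_T\cI^{ML}=-x^2/T^2$, $\partial_T\cI^{MH}=-cx^2/\sinh^2(\sqrt{c}\,T)$, and $\partial_T\cI^{MQ}=0$ once $T>\hat T$; in all three cases $\partial_T\cI(T,x)\to 0$, and there is an explicit threshold $\bar T_1(x)$ past which $|\partial_T\cI(T,x)|\le \kappa\sigma^2/(8\beta)$. For the information piece, $\cO(T,x,y)=\kappa\int_0^T\E_y[Y_s^2]\,ds$ with the (possibly $T$-dependent) myopic rate inserted, so $\partial_T\cO=\kappa\E_y[Y_T^2]+\kappa\int_0^T\partial_T\E_y[Y_s^2]\,ds$. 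The first term is $\ge\kappa\sigma_T^2=\tfrac{\kappa\sigma^2}{2\beta}(1-e^{-2\beta T})\ge\tfrac{\kappa\sigma^2}{4\beta}$ for $T\ge \ln 2/(2\beta)$, and by \eqref{eq:val1}--\eqref{eq:val2} its $y$-dependence only helps, since the $y^2$-coefficient of $\partial_T\cO$ is $\kappa e^{-2\beta T}>0$; the second term encodes the effect of the horizon on the myopic rate, vanishes when $\phi\equiv0$, and is $O(1/T)$ otherwise (locally uniformly in $y$). Combining, $\partial_T u^{M}(T,x,y)\ge \tfrac{\kappa\sigma^2}{8\beta}-o(1)>0$ for $T$ beyond some $\bar T(x)$, with $y$ in any fixed interval (e.g.\ the economically relevant imbalance range $[-1,1]$).

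\textbf{Dynamic family.} From \eqref{eq:riccati} (with $D\equiv E\equiv 0$), $\partial_T u^{DH}=x^2 A'(T)+y^2 B'(T)+xy\,C'(T)+\sigma^2 B(T)$. As $T\to\infty$, $(A,B,C)$ converges to the fixed point $(\bar A,\bar B,\bar C)$ of the algebraic system obtained by zeroing the right-hand sides of \eqref{eq:syste}, so $A',B',C'\to 0$ and $\partial_T u^{DH}\to\sigma^2\bar B$, uniformly for $(x,y)$ in bounded sets; it then suffices to show $\bar B>0$. Since the integrand in \eqref{vfun} is a sum of squares, even in the relaxed sign-unconstrained problem defining \eqref{eq:fpde2}, we have $u^{DH}\ge 0$, hence for each fixed $x\ge 0$ the quadratic $y\mapsto B(T)y^2+C(T)xy+A(T)x^2+F(T)$ is nonnegative; this forces $B(T)\ge 0$ and $A(T)B(T)\ge C(T)^2/4$, and if $B(T_0)=0$ at some $T_0>0$ then $C(T_0)=0$, whence the $B'$-equation gives $B'(T_0)=\kappa>0$, contradicting that $T_0$ is a minimum of $B$; thus $B(T)>0$ for $T>0$ and $\bar B\ge 0$. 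To exclude $\bar B=0$, substitute it into the steady-state equations: the $B'$-equation yields $\bar C^2=4\kappa$, the $C'$-equation then $\bar A=-\beta-\tfrac{\eta}{2}\bar C$, and the $A'$-equation $(\bar A-\eta\sqrt{\kappa})^2=c$, a system incompatible with $\bar A\ge 0$ and $\beta>0$; hence $\bar B>0$. The case $\lambda(x)=c$ (DL) is identical — only $F'=\sigma^2B+c$ changes, which can only raise $\partial_T u$.

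\textbf{Main obstacle.} The delicate point is the dynamic case: establishing that the Riccati solution \eqref{eq:syste} actually converges to an equilibrium and that the limiting $y^2$-coefficient $\bar B$ is \emph{strictly} positive. Since \eqref{eq:syste} is nonlinear with no closed-form equilibrium, one cannot simply exhibit $\bar B$; the route above gets around this by exploiting the sign constraint $u^{DH}\ge 0$ (which pins down the signs of $B$ and of $AB-C^2/4$) plus a short algebraic computation. A secondary nuisance is tracking the horizon-dependence of the myopic rate inside $\cO$, and — if one insists on \emph{all} $y\in\R$ rather than $y$ in the natural bounded imbalance range — controlling the $y$-linear correction in \eqref{eq:val2} against the exponentially small $y^2$-coefficient of $\partial_T u^{M}$; this is why the harmless reading is that $y$ lies in a fixed compact set.
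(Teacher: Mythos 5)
Your overall strategy --- split $u$ into a price-impact part that saturates as $T\to\infty$ and an information part that accrues at the stationary rate $\kappa\sigma^2/(2\beta)$ --- is exactly the paper's, but the paper's proof stops at the myopic family: it observes that $\cI^{ML},\cI^{MH},\cI^{MQ}$ have finite limits while $\cO(T,x,y)=\int_0^T\mu_t^2+\sigma_t^2\,dt$ grows at least linearly, and concludes. You go considerably further on two fronts. First, within the myopic family you differentiate $\cO$ correctly: since the myopic rate (hence $\phi_s$ and $\mu_t$) depends on $T$, $\partial_T\cO$ is \emph{not} just the integrand evaluated at $T$, and your extra term $\kappa\int_0^T\partial_T\E_y[Y_s^2]\,ds$ is a real correction that the paper silently drops. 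This is also where your uniformity caveat bites: that correction is of order $|y|/T^2$ with a sign that can be unfavorable, while the compensating $y^2$-coefficient of $\partial_T\cO$ is only $\kappa e^{-2\beta T}$, so a single $\bar T$ valid for literally all $y\in\R$ is not obtainable from this estimate when $\phi_t=\eta\alpha_t^{ML}$; your reading that $y$ should range over a compact (e.g.\ the imbalance range $[-1,1]$) is the honest fix, and is a genuine observation about the lemma as stated. Second, you supply an argument for the dynamic family $u^{DH},u^{DL}$, for which the paper offers no proof at all despite the lemma's claim to cover ``all the cases considered so far''; your use of $u^{DH}\ge 0$ to pin down $B\ge 0$ and $4AB\ge C^2$, and the contradiction $B'(T_0)=\kappa>0$ at a putative interior zero of $B$, is a nice self-contained way to get $B(T)>0$.

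Two soft spots in your dynamic-case argument are worth flagging. (i) Convergence of the Riccati system \eqref{eq:syste} to an equilibrium $(\bar A,\bar B,\bar C)$ is asserted, not proved; it is standard LQ theory under stabilizability/detectability but should be cited or argued (monotonicity of $T\mapsto u^{DH}(T,x,y)$ minus the accrued $\cO$-cost is one route). (ii) Your algebraic exclusion of $\bar B=0$ in the DH case is not quite airtight: combining $\bar C^2=4\kappa$, $\bar A+\tfrac{\eta}{2}\bar C=-\beta$ and $(\bar A+\tfrac{\eta}{2}\bar C)^2=c$ shows incompatibility only when $c\neq\beta^2$, so the degenerate case $c=\beta^2$ needs a separate word (the DL variant, where the $c$ term sits in $F'$, does go through cleanly as you say). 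Neither issue undermines the conclusion for generic parameters, and on balance your write-up is more complete than the paper's own proof.
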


 \begin{proof}
 Recall that $u = \cI + \cO$, cf.~\eqref{eq:generalV}.
 As $T \to \infty$, the variational problem \eqref{simp} for $\cI$ becomes independent of $T$. By inspection, $\lim_{T \to \infty} \cI^{ML}(T,x) = 0$ and $\lim_{T \to \infty} \cI^{MH}(T,x) = \sqrt{c}x^2$. In the quadratic case, for $T$ large enough, $\hat{T} = \frac{ 2 \sqrt{x}}{\sqrt{c}}$ so that $\lim_{T \to \infty} \cI^{MQ}(T,x) = \frac{4}{3}x^{3/2}\sqrt{c}$ for some function of the initial inventory. In contrast, $\cO(T,x,y) = \int_0^T \mu_t^2 + \sigma_t^2\,dt$ grows at least linearly in $T$ since $\lim_{t\to\infty} \sigma_t^2 = \frac{\sigma^2}{2\beta}$. Also, the first term is non-negative and it follows that $ \partial_T \cO(T,x,y) \ge \frac{\sigma^2}{2\beta}$ asymptotically as $T \to \infty$ and for any $y$. Hence, $\partial_T u > 0$ for all $T$ large enough.\hfill
 \end{proof}

 Figure \ref{fig:t-dep} illustrates Lemma \ref{lem:T-dependence} for the dynamic hyperbolic strategy with value function $u^{DH}$. We observe that $u^{DH}(T,x,y)$ appears to be convex in $T$ with a unique global minimum $T^*(x,y)$. Moreover, $T^*(x,y)$ is largest for $y \simeq 0$ and smallest for negative $y$. This matches the intuition that trading is slowest in balanced markets where informational costs are large, and fastest in sell-driven markets where further information leakage is minimal.

    \begin{figure}[ht]
    \centering
   \includegraphics[height=2.5in]{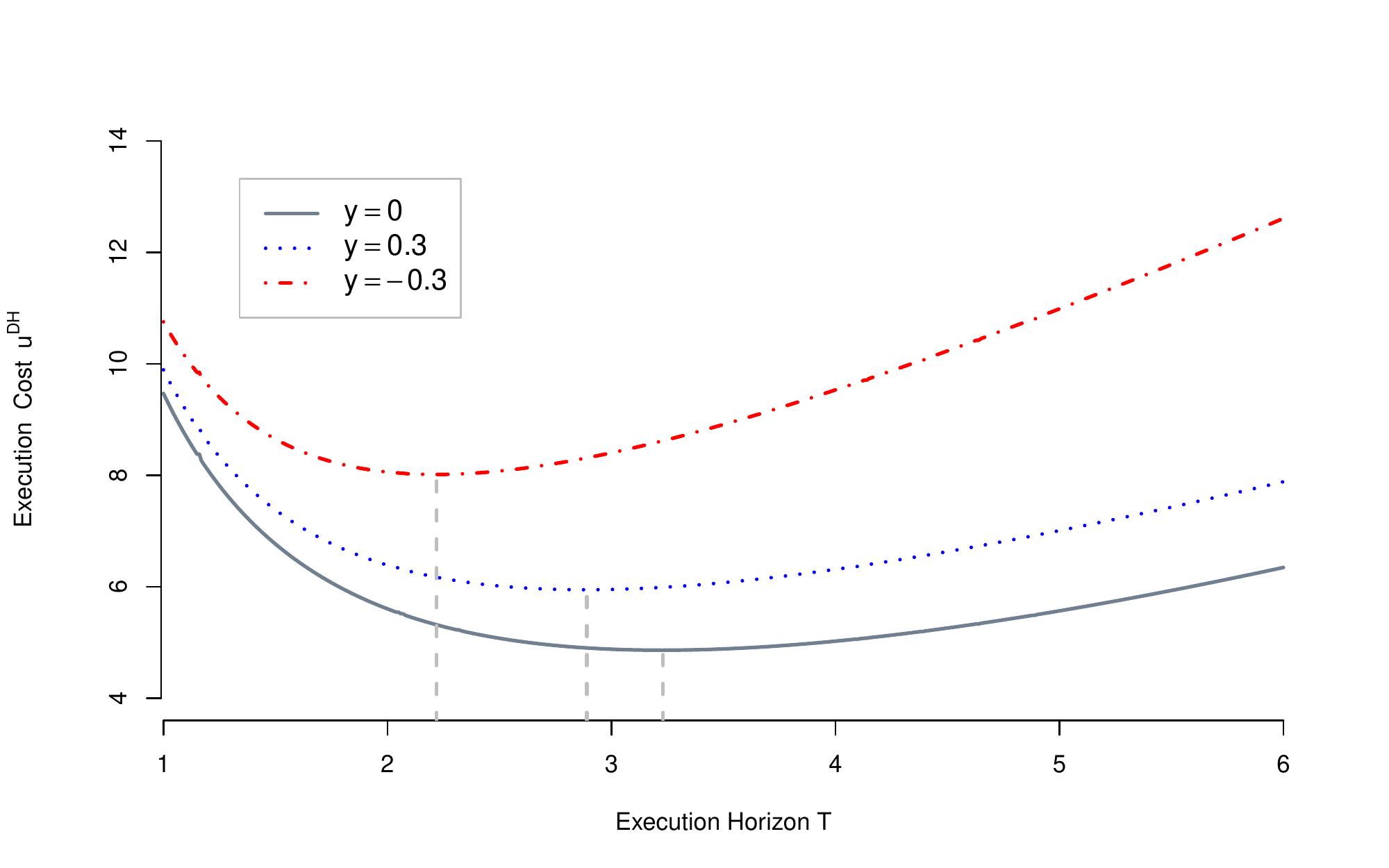}
   \caption{ Expected execution cost $u^{DH}(T,x,y)$ as a function of $T$ for different values of order flow imbalance $y$. The dashed line indicate the value of $T$ achieving the minimum.  Figure drawn for $\beta=.05$, $\sigma=.14$, $\eta=.05$, $\kappa=10$, $\lambda(x)=0.1 x^2$ and inventory $x=3$. \label{fig:t-dep} }
    \end{figure}

Given initial $(x,y)$ and corresponding $T^*(x,y)$, let $\alpha^{M}_t(T^*,x,y)$ (and similarly $\alpha^D_t(T^*,x,y)$) be the resulting strategy over the fixed horizon $[0,T^*)$. This provides a static or open-loop optimal execution strategy, since $T^*$ is fixed and not adjusted as order flow $Y_t$ changes. We can also construct a ``dynamic'' strategy by continually recomputing $T^*(x_t, Y_t)$ using the latest datum $(x_t,Y_t)$. We denote the latter as $$\talph^M_t(x,y) := \alpha^M(T^*(x_t,y_t),x_t,y_t)$$ with associated value function $\tilde{u}^{M}$.  The corresponding $\talph^D(x,y)$ and $\tilde{u}^D$ are defined in similar fashion. The approach of ``rolling'' the horizon $T^*$ as the underlying stochastic state changes is known as receding horizon control or model predictive control, see e.g.~\cite{Primbs07}.

\begin{remark}\label{rem:rebalance}
Recomputing $T^*$ can be done at any frequency. Namely, given a (stochastic) set $0 = t_0 < t_1 < \ldots$, one can construct the strategy $\alpha(\fT(t), x_t, Y_t)$ where $\fT(t) := T^*( x_{t_k}, Y_{t_k})$ and $t_k =\max\{ t_i : t_i < t\}$. For example, one can take $t_k = \inf \{ t: x_t \le (K-k)x/K\}$, giving $K$ rebalancing periods, during each of which $1/K$ of total inventory is liquidated.
\end{remark}


Before moving forward we pause briefly to summarize the various strategies that have been defined.  The fully dynamic strategy which solves the original indefinite-horizon control problem \eqref{eq:hjb} in Section \ref{sec:model} is denoted $\alpha^*(x,y)$.  In Section \ref{sec:myopic} we defined a family of myopic strategies on a fixed horizon, generally denoted $\alpha^M(T,x,y)$ and specific cases
addressed in Lemma \ref{lemma1}. Continually optimizing $T^*$ then yields the receding horizon strategy $\talph^M(x,y)$.  In Section \ref{sec:dynamic} we introduced the dynamic strategy $\alpha^D(T,x,y)$, which adapts to changing flow imbalance over a fixed horizon, as well as the corresponding receding $\talph^D(x,y)$.


We proceed to compare execution cost statistics across the described strategies. For easier interpretation we consider the case of zero inventory penalization, $\lambda(x)=c$ independent of $x$, so that the benchmark strategy (without informational costs) is VWAP, i.e.~constant trading rate.  The precise parameters were: timing risk $\lambda(x)=c=.1$ (i.e~linear timing costs), initial inventory and initial order flow imbalance $x=3$ and $y=0$ respectively, $\eta=.075$, $\kappa=10$, $\beta=.05$ and $\sigma=.14$. Against the original strategy $\alpha^*(x,y)$, we also compare the adaptive $\talph^D$ and $\talph^{ML}$. Both of these adjust the execution horizon by optimizing $T$ in the fixed-horizon solution. Practically, this was achieved by discretizing in time ($\Delta t=.01$) and recomputing $T^*(x_t, Y_t)$ at each time step, see Remark \ref{rem:rebalance}.
 Recall that $\alpha_t^{ML} = x_t/T^*(t)$. To understand the frequency of above ``rebalancing'', we also show results for the strategy ${\alpha}^{ML}_t(T^*_{(2)}, x,y)$ which recomputes the horizon midway through the liquidation process, when $x_t=\frac{x}{2}$.  The corresponding path of $x_t$ is therefore piecewise linear with two pieces, see Figure~\ref{fig:rates-comp}. This is a convenient compromise in the $VWAP$ setting and nicely illustrates the advantage gained when the trader is allowed to adjust the execution horizon. Finally, to understand the importance of adaptively adjusting $T^*$, we also compare to the static $\alpha^{DL}(T^*,x,y)$ and $\alpha^{ML}(T^*,x,y) = x/T^*(x,y)$.

 Table \ref{tab1} shows some summary statistics about the distribution of realized costs $J(\alpha) := \int_{0}^{T_0} \left(\alpha_s^2+ \kappa Y^2_s+ c \right) ds$. The results were produced with $2000$ simulated paths of $(Y^0_t)$.  The actual realized order flow imbalance paths for each strategy reflect the assumption that $\phi(\alpha_t)=\eta \alpha_t$ represents the true form of information leakage. Beyond the average expected costs $u(x,y) := \E_{x,y}[ J(\alpha)]$, we also report the standard deviation  $SD$ and quantiles $q_\cdot$ (at the 5\% and 95\% level) of $J(\alpha)$ which are important for risk-management perspective. Lastly, we also report the average realized horizon $\E[ T_0]$. Of course for non-adaptive strategies, $T_0 \equiv T^*(x,y)$ is constant.
  Comparing each $\talph$ to its respective fixed horizon counterpart demonstrates the importance of utilizing ``adaptive" execution horizon.  Similarly, comparing respective myopic to dynamic strategies shows that the modelling of deterministic information leakage in the former is not too suboptimal compared to the fully dynamic proportional information leakage strategy of the latter.  The cost improvements achieved through optimizing the horizon tend to dominate those obtained through adopting a dynamic strategy in lieu of a myopic strategy.
\begin{table}[ht]
\footnotesize
\begin{center}$$
\begin{array}{lllllll}
\hline
\multicolumn{7}{ c }{ \text{Optimal Execution Strategy}} \\
 & v & \tilde{u}^D & \tilde{u}^{ML}& u^{ML}_{(2)} & u^D & u^{ML}\\ \hline \hline
\E[J(\alpha)] & 4.257 & 4.264 & 4.317& 4.411 & 4.483 & 4.547 \\
SD(J(\alpha)) &1.50 &1.45 &1.39&1.49 &1.77 &1.84 \\
q_{.05}(J(\alpha)) & 2.70 & 2.76 & 2.83&2.96 & 3.11 &3.12 \\
q_{.95}(J(\alpha)) & 7.33 & 7.28 & 7.10& 7.50& 8.19&8.42 \\
\E[T_0] &3.87 &3.70 &3.48& 3.44& 3.43 & 3.43 \\ \hline
\end{array}$$
\end{center}
\caption{Statistics for six execution strategies including average realized cost $J(\alpha)$, standard deviation, $.05-$ and $.95-$quantiles of realized costs as well as average realized execution horizon.  
\label{tab1}}
\end{table}

Of particular interest is that the closed form solution computed for $u^{ML}$ and resulting strategy $\talph_t^{ML}$ form a reasonable approximation for the difficult indefinite horizon setup in \eqref{hjb V} - \eqref{eq:hjb2}.  The cost improvement of the fully dynamic $\alpha^*$ over VWAP strategy $\alpha^{ML}$ is approximately $6.8\%$.  This appears somewhat modest, but note that the latter strategy is applied to the horizon $[0,T^*]$, which is the statically optimal horizon computed at $t=0$ with the value function $u^{ML}$. 

Figure \ref{fig:rates-comp} illustrates the various strategies for a sample simulated order flow imbalance path $(Y^0_t)$.  The one-sided order flow in this particular simulation causes the ``adaptive" horizon strategies to accelerate trading and shorten the horizon relative to the fixed horizon strategies.

\begin{figure}[ht]
\begin{minipage}{0.95   \textwidth}
\includegraphics[width=1\textwidth,trim=0in 0.1in 0.1in 0.2in]{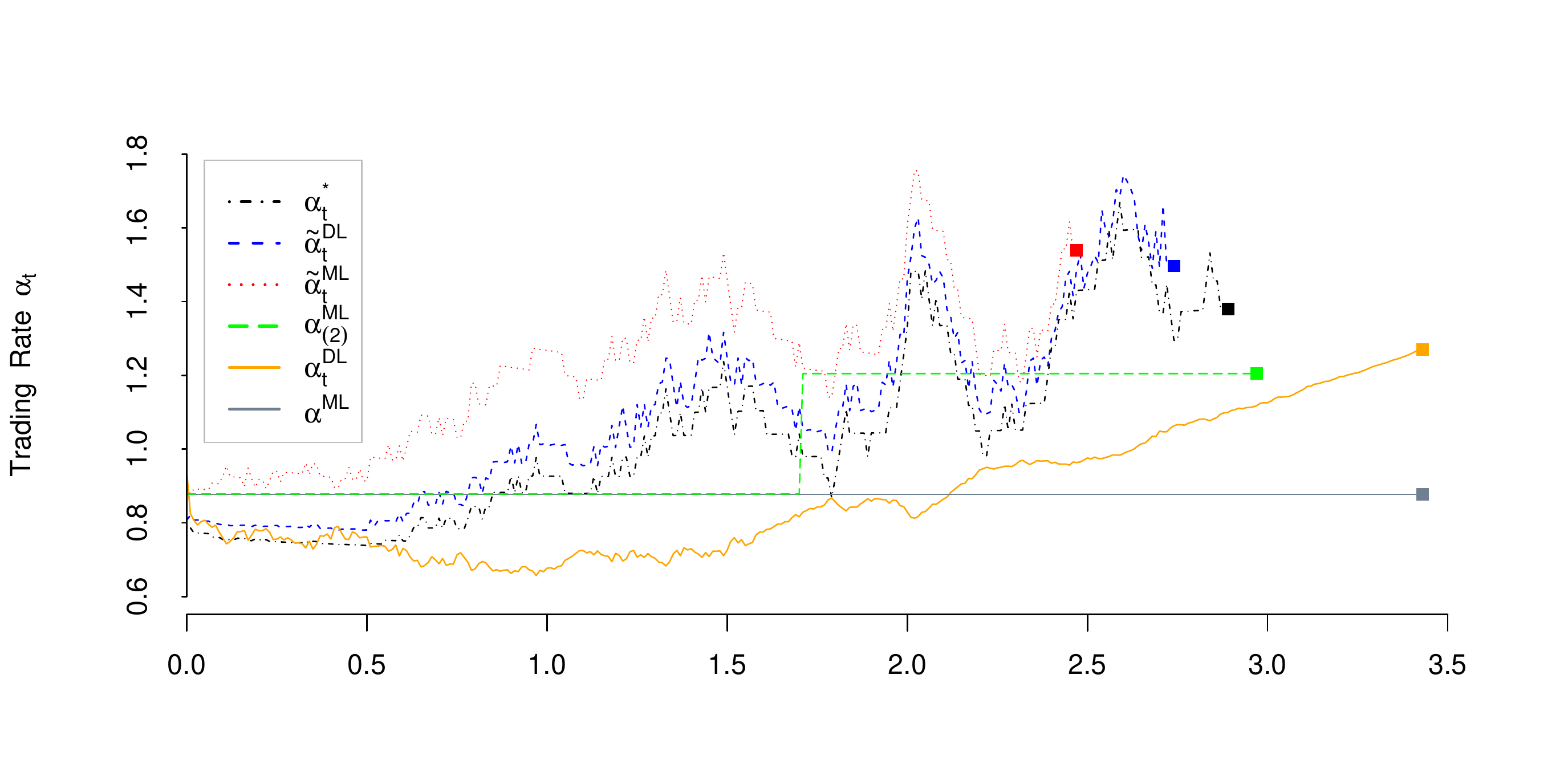}\\
\includegraphics[width=1\textwidth,height=1.5in, trim=0in 0.2in 0in 0.5in]{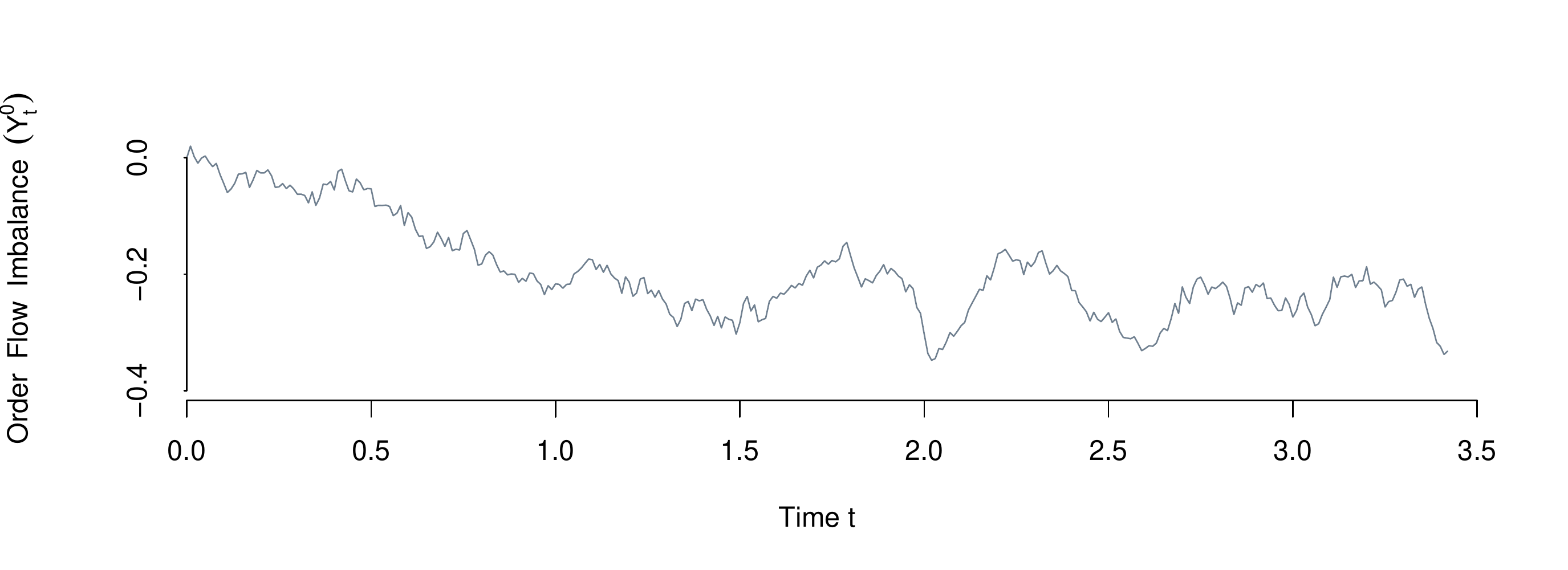}
\end{minipage}

\caption{Comparison of trading rates $(\alpha_t)$ for each of six strategies in Table \ref{tab1} given the shown simulated path of $(Y^0_t)$ (The realized $(Y_t)$ depends on the strategy chosen). Note that each strategy terminates at a different $T_0$ indicated with a square. \label{fig:rates-comp}  }
 \end{figure}

\subsection{Comparative Statics}
Focusing on a single strategy, $\talph^{DL}_t$, we briefly discuss how adjusting the values for parameters $c$, $\eta$, and $\kappa$ affect the trading rate and realized horizon.  Increasing $c$ corresponds to lower tolerance for timing risk and intuitively leads to a shorter realized horizon and an increase in trading rate across all values of $y$.  Choices for $\kappa$ and $\eta$ depend respectively on the trader's assessment of the added cost of transacting when order flow is unbalanced and exactly how susceptible one is to revealing information to other participants.  Increasing $\kappa$ raises sensitivity to the order flow imbalance which leads to an increase in trading speed and shortened execution horizon, particularly when $(Y_t^2)$ moves away from $0$.  At the other extreme, setting $\kappa=0$ leads to the strategies addressed in Lemma \ref{lem:exec-curve} with zero informational cost.  The effects of increasing $\eta$ depend on the market state, increasing the trading rate when order flow tilts towards buy orders and slowing when order flow is balanced or sell orders dominate.  Specifically, an increase in $\eta$ means a stronger trade impact on the order flow process, and thus it is beneficial in a buy market to tolerate somewhat higher instantaneous costs because the trader can more efficiently capture the savings that result from more balanced order flow in the future (and vice-versa in a sell-tilted market).  Figure \ref{fig:tratecomp} illustrates these comparative statics for $\talph^{DL}$ in terms of $\kappa$ and $\nu$.
Note that while theoretically $\alpha^{DL}$ from \eqref{eq:a-star3} could be negative, in all our plots $\alpha^{DL}$ remains far from zero and well-behaved.

    \begin{figure}[ht]
    \centering
   \includegraphics[height=2.5in,trim=0.1in 0.3in 0.1in 0.5in]{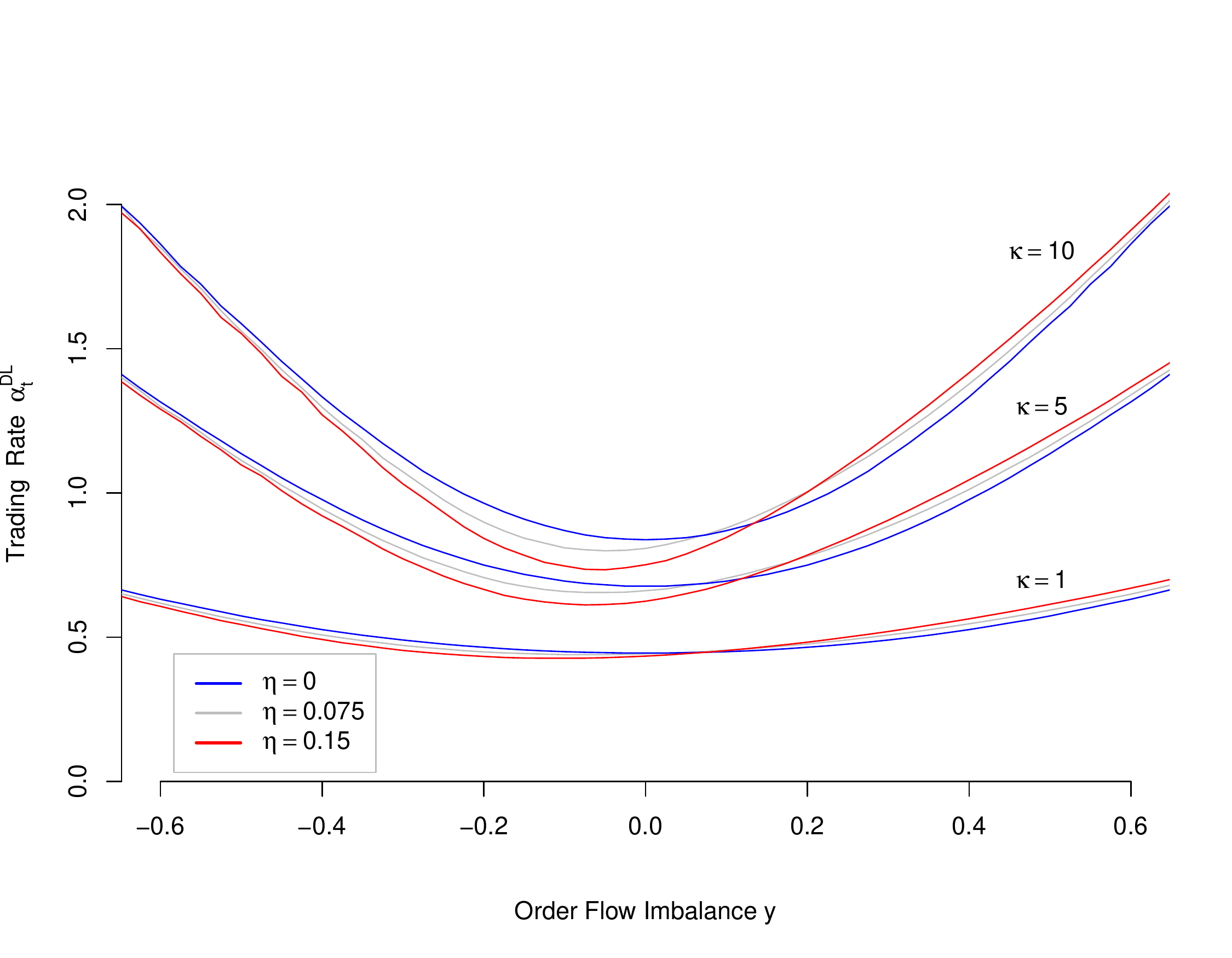}
   \caption{Trading rates $\talph^{DL}(x,y)$ plotted as a function of flow imbalance $y$ for different values of informational cost $\kappa$ and information leakage strength $\eta$. Inventory level is fixed at $x=3$.\label{fig:tratecomp}  }
    \end{figure}

\subsection{Realized Execution Horizon}
 We now explore some features of the realized execution horizon $T_0(x,y)$ when following the dynamic strategy $\talph^D_t(x_t,Y_t)$.  The left panel of Figure \ref{fig:fT-D} highlights the distribution of $T_0(x,y)$ for different initial market states. We observe that $T_0$ tends to be longest in a balanced market. Indeed, in that case the trader pays the most attention on minimizing his footprint and instantaneous execution costs, and therefore trades slowly. With positive imbalance $Y_t>0$, he is incentivized to trade at a more rapid pace in order to bring the market into a more balanced state.  On the other hand, when a market is dominated by sell orders $Y_t< 0$, the trader finds himself competing for liquidity and trading occurs at an even faster pace. The asymmetric effect of these effects creates a skew even with a symmetric informational cost $\kappa y^2$. This phenomenon is further shown in the right panel of Figure \ref{fig:fT-D} that shows a scatterplot of $T_0$ against terminal $Y_{T_0}$. It is also clear that the issue is not only whether order flow is balanced versus unbalanced, rather the side of the trade is very pertinent to the optimal strategy and realized horizon.  Generally, imbalanced order flow results in higher trading costs and a shorter horizon, but clearly as one would expect, trading against the prevailing order flow (selling when order flow is dominated by buy orders) is preferable.

We remark that with $T$ fixed, execution rate decreases with $Y_t$ in hopes that the order flow process will revert to a more balanced state and the cost of liquidity will decline in the future.  However, allowing the horizon to be adjusted brings a new incentive for accelerating execution in order to exit the market altogether and stop information leakage.  This is a phenomenon seen especially in times of panic or capitulation when minimizing the footprint is less important than finding liquidity, even at greater cost.

We also observe a strong correlation between realized execution cost and realized execution horizon.  Unbalanced order flow results in higher costs from the $Y_t^2$ term.  In addition, as lopsided order flow causes trading to accelerate, the trader also incurs higher costs from the instantaneous cost term $\alpha_t^2$.  So costs tend to be lower for longer realized execution horizon. Figure \ref{fig:realized-oi} provides another perspective on this feature by highlighting several specific inventory trajectories and the corresponding realized order flow paths. It also shows that the spread in realized horizon $T_0$ is significant and can be up to 50\% of the static $T^*$.

 \begin{figure}[ht]
 \centering
 \begin{tabular}{cc}
 \includegraphics[height=2.5in,trim=0in 0.2in 0in 0.3in]{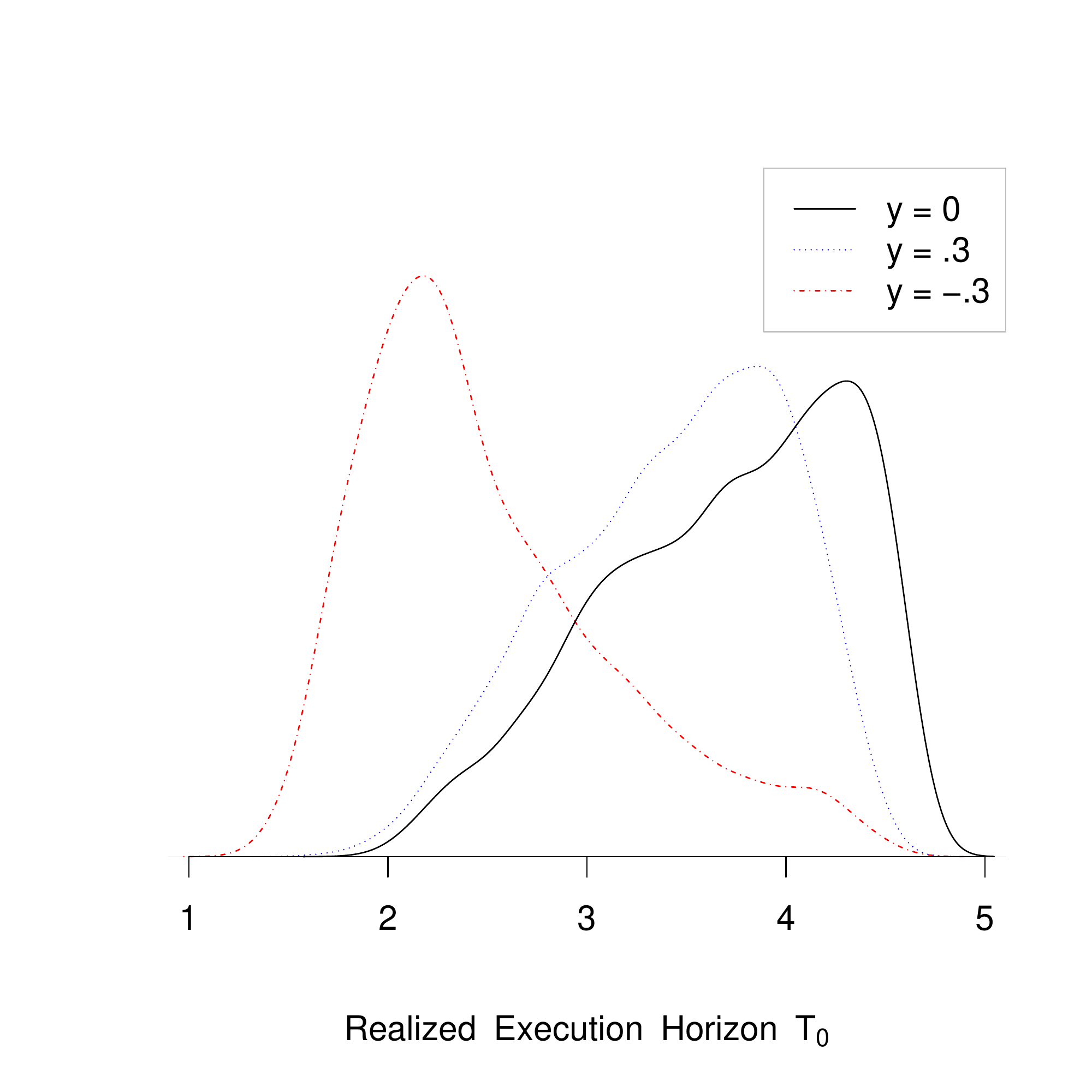} & 
 \includegraphics[height=2.5in,trim=0in 0.2in 0in 0.3in]{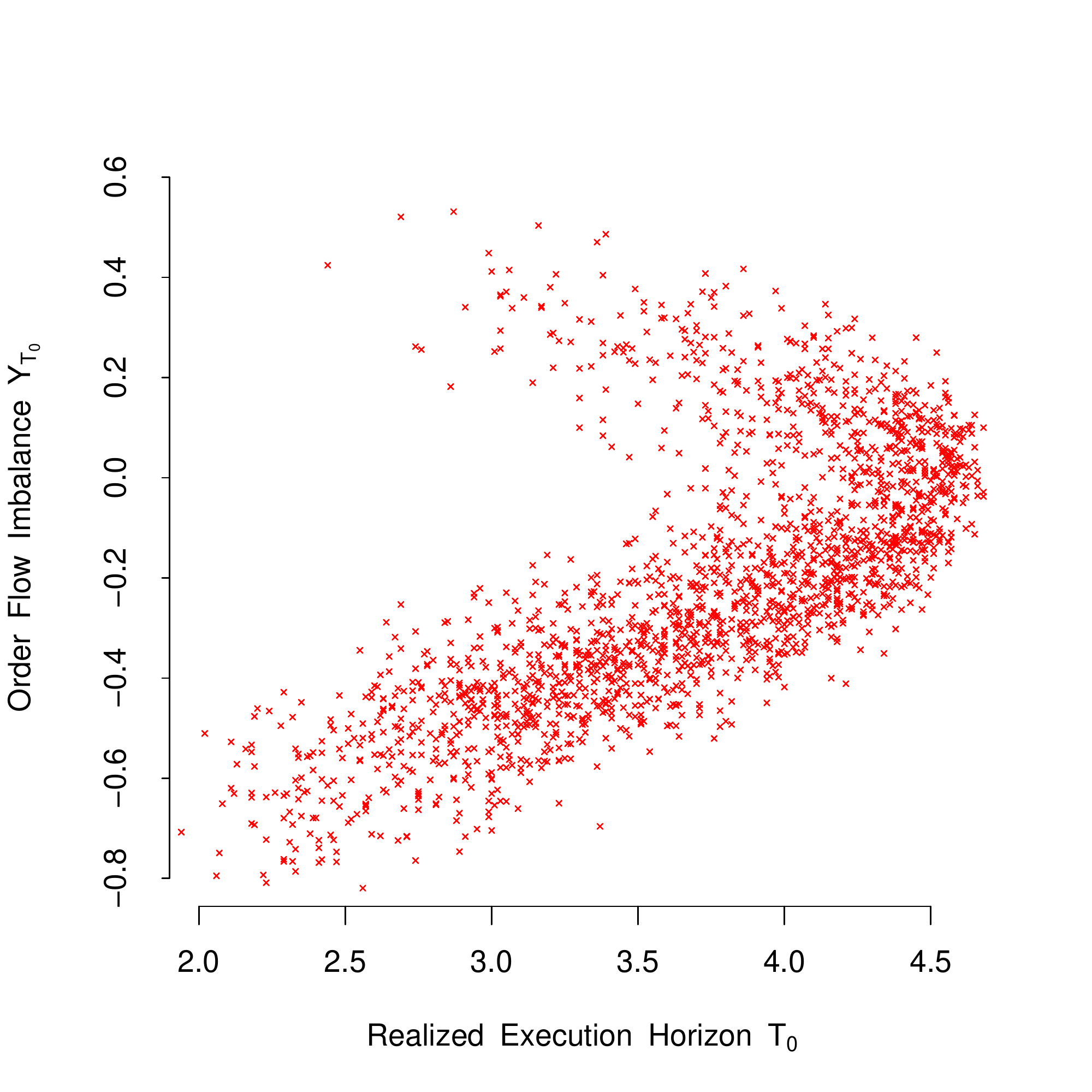}
  \end{tabular}
 \caption{Left: Distribution of realized execution horizon $T_0$ following strategy $\talph^{DL}$ for different values of initial flow imbalance $Y_0=y$. Right: Realized execution horizon $T_0$ against final order flow imbalance $Y_{T_0}$ when initial imbalance $Y_0=0$. 
 \label{fig:fT-D}}
 \end{figure}


\begin{figure}[ht]
\centering
\includegraphics[height=2.2in,width=0.9\textwidth,trim=0in 0.25in 0in 0.1in]{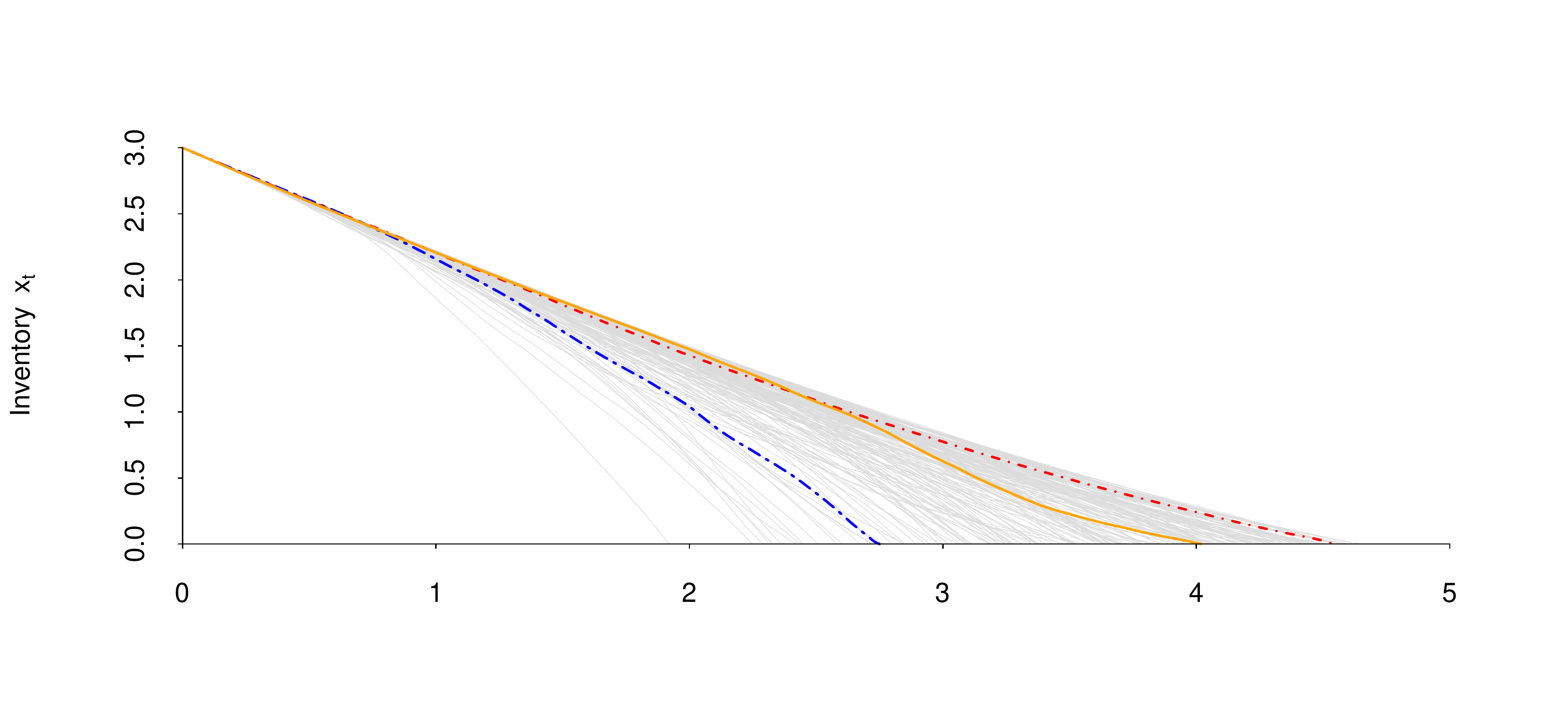}
\includegraphics[height=1.5in,width=0.9\textwidth,trim=0in 0.25in 0in 0.5in]{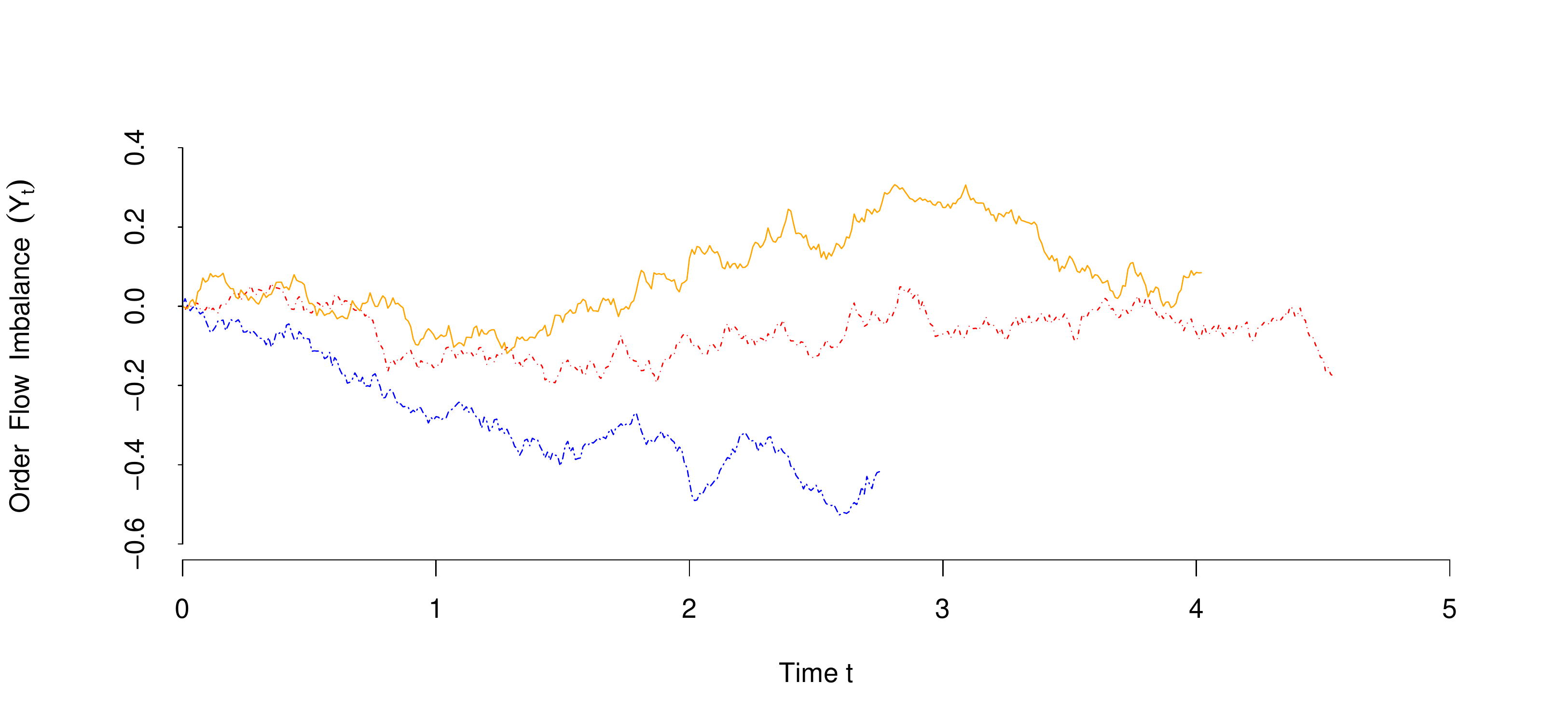}
\caption{Top: 200 simulated trajectories $(x_t)$ from dynamic adaptive strategy $\talph_t^{DL}$.  Highlighted are three trajectories resulting from different realized order flow imbalance  $(Y_t)$ paths.  Bottom: Corresponding realizations of order flow imbalance $t\mapsto Y_t$. \label{fig:realized-oi}}
\end{figure}

\subsection{Static Information Leakage}\label{sec:static}
One of the motivations for the present analysis was the work of Easley et al.~\cite{easley2012optimal} (ELO), who considered a related static optimal execution horizon model. Through the lens of our setup, \cite{easley2012optimal} treated the case where informational costs are measured only through $Y_{T}$ rather than through the integral term in \eqref{oppr}. Specifically, ELO equated informational footprint to the absolute value of the terminal flow imbalance $|Y_T|$. Also, ELO (implicitly) assumed a VWAP execution strategy on $[0,T]$ which is equivalent to taking zero inventory risk $\lambda(x) = 0$ and instantaneous impact $\int_0^T \dot{x}_s \,ds$ and translates to the myopic strategy $\alpha^{ML}$ of constant trading rate. Finally, timing risk was modelled directly as $\Lambda(T) = c \sqrt{T}$, motivated by the same structural form for volatility of $S_T$.  The overall problem in \cite{easley2012optimal} was therefore
\begin{equation} \label{eq:Eas}
\min_{T \ge 0} \left\{\mathbb{E} [ |Y^\alpha_T|] +c \sqrt{T} \right\}, \qquad \alpha_t = x/T.
\end{equation}
Our framework allows treatment of \eqref{eq:Eas} in a dynamic setup, i.e.~beyond the myopic strategies and beyond a static optimization to obtain $T^*(x,y)$.

The possibility utilized in \cite{easley2012optimal} to directly incorporate a timing cost of the form $\Lambda(T)$ can be easily handled in more generality within \eqref{oppr} since the latter term makes no difference to the fixed-horizon problems in Sections \ref{sec:myopic}-\ref{sec:linear-quadratic} and hence only shows up in the second-step optimization over $T$. Based on numerical experiments, replacing running $Y$-costs with a terminal cost $\propto Y^2_{T_0}$, tends to slow the optimal trading strategy in sell dominated markets which allows the mean reversion in the order flow process  to kick in and lower the terminal cost.  In the presence of positive order flow, the change in trading rate can be in either direction depending on $\phi(\alpha)$ and the trade-off between instantaneous costs and informational costs.


\section{Model Calibration}\label{sec:implement}
To implement the proposed execution strategies, the trader must observe the order flow imbalance $Y_t$. Moreover, they need to be able to calibrate the parameters of $Y_t$. This requirement is different from typical execution strategies that operate in ``open-loop'' settings, i.e.~without any immediate input of market data. Of course, most empirical trading is ``closed-loop'' and dynamically responds to market messages. In this section we briefly discuss such calibration and translation of market information into model inputs.

\subsection{Empirical Order Flow}\label{sec:empirical-flow}

We begin by focusing on \emph{executed} orders, which from the flow point of view can be summarized as a sequence $V_1, V_2, \ldots,$ where $V_k$ is the signed market order volume (positive for buys and negative for sells) for the security in question. We assume that trading is in volume time, so there is no separate time-stamp component. A raw order flow would then be the cumulative sum $\sum_k V_k$. Assuming that the participants focus on recent trades (i.e.~market memory is limited) leads to consideration of moving averages of $V_k$.  By analogy to the discussed Ornstein-Uhlenbeck dynamics we therefore introduce the following exponentially weighted moving average (EWMA) flow imbalance process $(I_k)$ that is defined recursively via
\begin{align}\label{eq:ewma}
I_{k+1} = e^{-\beta|V_k|} I_k +(1-e^{-\beta|V_k|})\sgn(V_k),
\end{align}
where the memory parameter $\beta$ is a proxy for the time-scale of market participants persistency of beliefs about order flow.
Intuitively, if all trades were of unit volume, we would have $I_{k+1}=e^{-\beta} I_k + (1-e^{-\beta}\sgn(V_k)$; treating a single trade of $|V_k|$ as that many unit-volume trades leads to \eqref{eq:ewma}. We suggest that $\beta = a/V_{daily}$ where $V_{daily}$ is the average daily volume and $a \in [10,100]$ is the intra-day mesoscopic time-scale of order flow. By construction, $I_k$ takes values in $[-1,1]$, with $I_k=0$ representing a balanced market, and positive and negative values of $I_k$ representing a market tilted towards buying and selling respectively. Figure \ref{fig:empoi} shows a typical daily path of $I_k$ for two different values of $\beta$. To draw the Figure we considered all executed Nasdaq ITCH trades between 9:40am and 3:55pm on a fixed trading day and initialized $\tilde{I}_0  =I_0 =0$ in the beginning (note that if one fully adheres to the concept of moving averages, $I_0$ should include flow from the previous day, but this is rather problematic to properly implement). As expected, we observe a mean-reverting behavior due to strong negative auto-correlation in $V_k$'s. The value of $\beta$ controls the volatility of $I_k$.

Moving beyond \eqref{eq:ewma} requires separate investigation beyond the scope of this paper. Indeed, the modern limit order book messages are much richer than just a sequence of signed volumes. There are issues of inter-trade durations, intra-day seasonality, and other market information. Moreover, the data features can very significantly between more liquid and less liquid names.
Table \ref{tab2} summarizes some basic details about order flow within a given trading week across three representative equity tickers. We note that each day one must process data for tens of thousands of executed trades, but in fact this is a tip of the iceberg, since there are hundreds of thousands of other trades recorded in the LOB. 

\begin{figure}[ht]
\centering
\includegraphics[height=2.5in,trim=0.15in 0.25in 0.15in 0.25in]{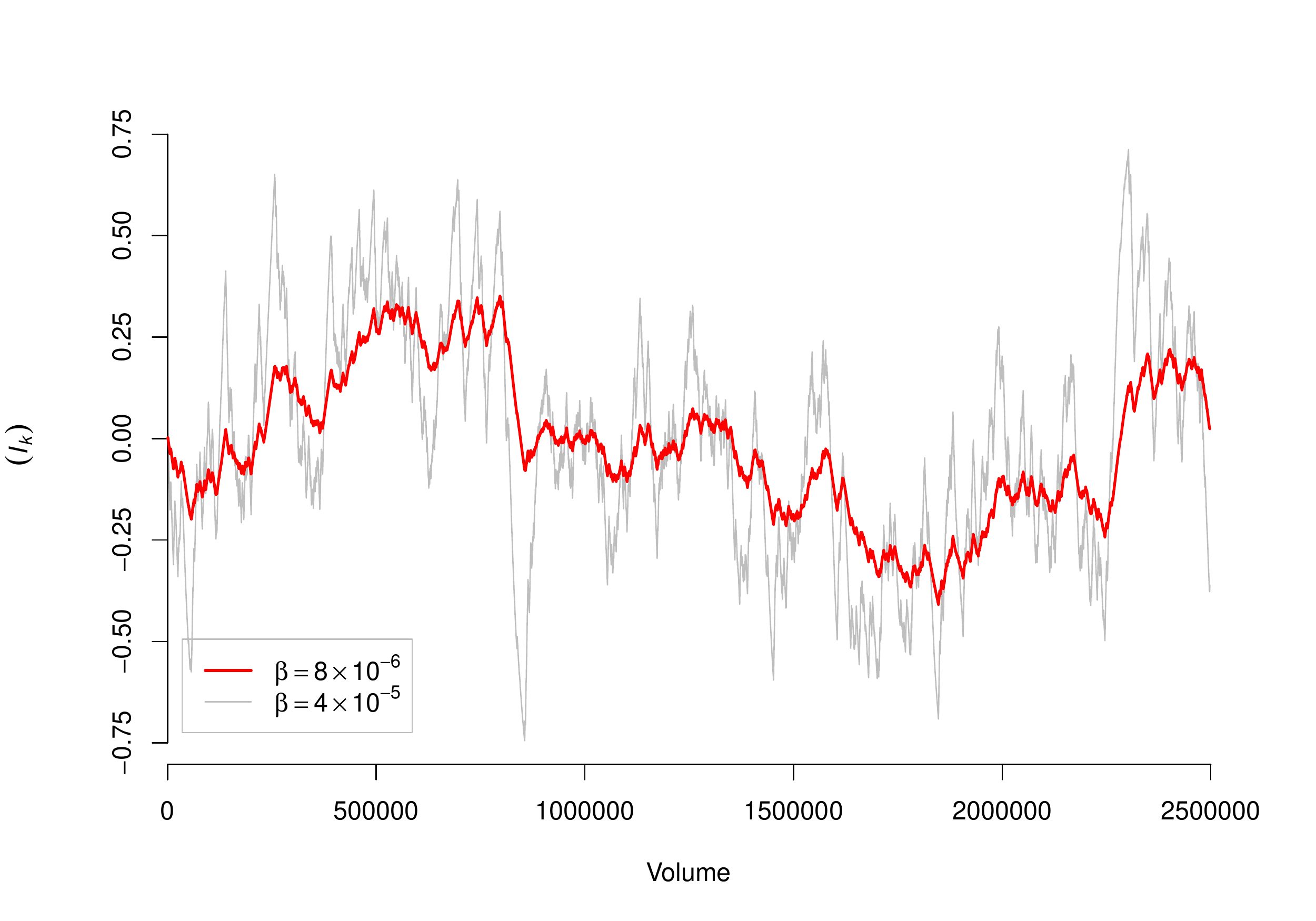}
\caption{The EWMA order flow imbalance metric for Teva Pharmaceutical (ticker: TEVA) for a single day 5/3/2011.  The data includes executed orders from NASDAQ, BATS and Direct Edge exchanges which accounted for $2,497,623$ of the $8,059,668$ total traded shares on the day.  We also show the VPIN-like metric that used $V=25,000$ and $n=20$ in \eqref{eq:realized-imbalance} and \eqref{eq:VPIN} respectively. \label{fig:empoi}}
\end{figure}

\begin{remark}
While traditionally only executed trades were used to define order flow, in limit order books this approach is questionable. Indeed,
in the 2010s, market orders represent less than 5\% of total order flow, see Table \ref{tab2}. At the same time, as far as the snapshots of the LOB queues and respective depths are concerned, limit orders that add/cancel at the touch (highest bid/lowest ask price levels) are effectively equivalent to market orders. Therefore, one probably ought to include such limit orders when computing $I_k$ in \eqref{eq:ewma}. Table \ref{tab2} shows that the distribution of limit orders is markedly different from market orders, with limit orders tending to be about 50\% smaller on average. Limit orders deeper into the books may also carry material information and could be considered (with a respective weight depending on the corresponding depth level). The precise way to define order flow is therefore ambiguous.
\end{remark}

\begin{table}[ht]
\footnotesize
\begin{center}$$
\begin{array}{rrrrr}
\hline
\multicolumn{5}{ c }{ \text{Daily Order Summary}} \\
& & \text{MSFT} & \text{TEVA} & \text{BBBY} \\ \hline \hline
& \text{Volume} &32,748,575&3,809,972&1,222,164\\
\text{Executions} &\text{$\#$ of Orders} &26,814&13,811&8,302 \\
&\text{Avg trade size} &1,222&271&148 \\  \hline
\text{Limit}& \text{Volume}&271,383,713&25,523,228&13,558,498  \\
\text{Orders}& \text{$\#$ of Orders}&577,164&182,541&127,362 \\
\text{at Touch}& \text{Avg order size}&469&139&107 \\ \hline
& \text{Volume}&675,466,734&98,398,817&70,149,547 \\
\text{Total}&\text{$\#$ of Orders} &1,284,524&542,762&534,557 \\
&\text{Avg order size} &526&181&132 \\  \hline
\end{array}$$
\end{center}
\caption{Summary daily statistics for LOB orders on the tickers MSFT (Microsoft), TEVA (Teva Pharmaceutical) and BBBY (Blackberry) from NASDAQ, BATS BZX, EDGA and EDGX exchanges. We report averages for the week (5 trading days) of 5/2/2011-5/6/2011.
\label{tab2}}
\end{table}

An alternative way to define empirical order flow imbalance is based on bucketing. This approach is more in line with a discrete model, such as the one in ELO \cite{easley2012optimal} and indexes flow by equally-sized volume slices rather than by individual trades.
Namely, consider (executed) volume slices of size $V = V^{B}_\ell+V^{S}_\ell$ where $(V^{B}_\ell)$ and $(V^{S}_\ell)$ represent the buy and sell volume respectively for the $\ell$-th slice. The bucket flow imbalance $\tilde{I}_\ell$ is
\begin{equation}\label{eq:realized-imbalance}
\tilde{I}_\ell :=\frac{V^{B}_\ell-V^{S}_\ell}{V}=2V^{B}_\ell-1.\end{equation}
Compared to \eqref{eq:ewma}, we have the link $V \simeq 1/\beta$ to achieve same time-scale for $(I_k)$ and $(\tilde{I}_\ell)$.
The difficulty with \eqref{eq:realized-imbalance} is that order volumes are heavy-tailed and a large order that spills over multiple volume buckets must be somehow handled. Moreover, $\tilde{I}$ in \eqref{eq:realized-imbalance} is intrinsically tied to the chosen bucket size $V$ (See \cite{andersen2012vpin} and \cite{easley2012vpin} for different arguments regarding appropriate bucket sizes). Finally, as in \eqref{eq:ewma}, a decision must be made how to include limit order messages when defining volume buckets and respective $V^B$ and $V^S$.

The defined $I_k$ and $\tilde{I}_\ell$ are directly observed, and one could attempt to use them as the basis
for the flow process $(Y_t)$ used in the previous sections. According to \cite{easley2011microstructure,easley2012flow,easley2012optimal,easley2012vpin}, informational costs arise from order flow \emph{toxicity} which is in turn tied to the participants' beliefs about probability of adverse selection. Hence, translating {past} information contained in $I$ into $Y$ requires making a judgement on how such beliefs about future order flows are formed. It is commonly accepted  that certain trades are influential or informative while others have little to no impact on the market.  With this in mind, it follows that a trade may have little influence on a market maker's expected flow imbalance even if the associated trade volume $V_k$ was large. Thus, the private information leaked to the market by a trade is the product of numerous factors beyond trade size: spacing of successive orders, prevailing market state, LOB shape, etc. Consequently, the exact relation between $(I_k)$ and $(Y_t)$ remains open. Further questions about the most relevant time scale or how the recent history of observed order imbalance might influence the expected future order imbalance
 are postponed to future empirical study.

\begin{remark}
In \cite{easley2012vpin}, ELO contend that order flow toxicity is linked to the level of ``informed'' trading and can be approximated via the following VPIN metric based on \eqref{eq:realized-imbalance},
\begin{align}\label{eq:VPIN}
\text{VPIN}_\ell=\frac{1}{n} \sum_{k=\ell-n}^{\ell-1} |\tilde{I}_k|,
\end{align}
where $n$, chosen along with bucket size $V$ represents the window relevant for persistency of order flow. Thus, VPIN  is a moving average of observed values of $\tilde{I}_\ell$ for $n$ latest volume slices. According to ELO, VPIN is a good approximation to the market-makers' expectations of the current imbalance $\text{VPIN}_\ell \approx {\E| \tilde{I}_{\ell} |}$ and hence can be used as a proxy for $Y_t$ in \eqref{eq:Eas}. In other words, the traders should act to minimize their impact on VPIN, which is the expected absolute order flow imbalance. Because VPIN is directly based on observed traded volumes, this also allows an explicit definition of trader's informational impact, see \eqref{eq:phi-I} below. ELO suggest to take $V = V_{daily}/50$ and $n=50$ which makes VPIN a daily moving average of flow imbalance. This seems rather long and in Figure \ref{fig:empoi} we make $n$ smaller to focus on intra-day scale.
\end{remark}

\subsection{Discrete Time Formulation}\label{sec:discrete-time}

Building on \eqref{eq:realized-imbalance} and \eqref{eq:VPIN} one can construct a discrete model for optimal execution. This involves
reinterpreting the strategy $\alpha$ as a participation rate based on the observation that
 an executed sell trade inherently affects the next bucket imbalance $\tilde{I}_\ell$ since it physically displaces some of the other volume from that bucket. A discrete-time model also allows to examine more general costs and model dynamics.

Fixing a volume bucket $V$, we assume that the trader chooses a participation rate $\alpha_k$ at each step, where $$x_{k+1}=x_k-\alpha_k V$$ and $k$ indexes trade volume. The trader's participation influences the flow imbalance at the next step via
\begin{equation}\label{eq:affected-I}
Y_{k+1}=F(Y_k,\eps_{k+1})-\phi(\alpha_k, Y_k)
\end{equation}
where $\eps_{k+1}$ are independent random perturbations, $F(y,\cdot)$ models the dynamics in $Y_k$ that happen apart from the trader and $\phi(\alpha,y)$ is the information leakage given previous imbalance $Y_k=y$. One motivation to generalize the leakage function is the trade influence proposed in \cite{easley2012optimal},
\begin{equation}\label{eq:phi-I}
Y_{k+1}=\psi(\alpha_k)({Y}_{k}(1-\alpha_k)-\alpha_k)+(1-\psi(\alpha_k))Y_{k} + \eps_{k+1},\end{equation}
where we still have $\alpha_k \in [0,1]$ and the function $\psi\in [0,1]$ is monotonic increasing.  The new expected order flow imbalance is a convex combination of two extreme outcomes: full leakage according to $\psi$ (first term) and no leakage (second term).  \eqref{eq:phi-I} simplifies to
\begin{equation*} 
\E[Y_{k+1}]=Y_k-\alpha_k \psi(\alpha_k)(Y_k+1)=:Y_k-\phi(\alpha_k, Y_k).
\end{equation*}

In analogue to Section \ref{sec:model}, the trader's goal is to minimize total expected costs until the entire position has been liquidated,
\begin{align}\label{eq:discrete-v}
v(x,y) :=\inf_{\alpha} \E_{x,y} \left[ \sum_{k=0}^{T_0-1} g(\alpha_k) + \kappa Y_k^2 + \lambda(x_k)\right ], \quad T_0 = \min \{ k: x_k = 0\}.
\end{align}
The control $\alpha_k$ is constrained so that $\alpha_k \in (0,1]$ and is assumed to be in feedback form, $\alpha_k = \alpha(x_k, Y_k)$.


The indefinite-horizon control problem \eqref{eq:discrete-v} can be solved by introducing an auxiliary ``time'' variable $t$ such that execution stops after $t$ steps (if it did not terminate already) and remaining inventory at $t$ incurs a terminal cost $H(x_t)= A x_t^2$ (assuming immediate one-step liquidation after $t$, cf.~the VWAP strategy in \eqref{eq:x-star-lin}). This auxiliary problem has value function $v^{(t)}$ defined in \eqref{eq:discrete-dp}. As $t \to\infty$, this execution horizon constraint vanishes and we expect to recover the time-stationary solution $v(x,y)$ of \eqref{eq:discrete-v}.

Using $t$ as time-to-maturity, we have the discrete-time dynamic programming equations
\begin{align}v^{(0)}(x,y) &=H(x),\notag \\
\label{eq:discrete-dp}
v^{(t)}(x,y) &=\inf\limits_{\alpha \in (0,1]} \mathbb{E}_{y}\Bigl[ g(\alpha) + \kappa y^2 + \lambda(x) +v^{(t-1)}(x-\alpha V, Y_1)\Bigr]\end{align}
where  $Y_1 = Y_1^\alpha$ is defined by \eqref{eq:affected-I} and $v^{(t)}(0,y) =0 \forall y$. The one-stage
problem in \eqref{eq:discrete-dp} can be readily solved using a Markov-chain approximation method by discretizing the state space of $Y$ and the bounded control space of $\alpha$ (which also makes the state space of $x_t$ discrete).
%
%
The above procedure allows arbitrary dynamics for $(Y_k)$ beyond \eqref{eq:phi-I} since one can always use Monte Carlo or other methods to compute the transition density $p_{Y_1}( \cdot | Y_0=y,\alpha)$. 

\subsection{Future Directions}\label{sec:conclusion}
An important aspect that is missing from the presented models is \emph{price risk}. With a fixed horizon, the assumption that the unperturbed asset price is a martingale makes realized revenue only depend on execution risk. However, once the agent has the liberty to extend the execution horizon, this is no longer the case and the trader could also chase higher revenues. Explicit modeling of such objectives would necessarily increase the dimensionality since another stochastic state variable, the (mid-)price $S_t$, must be added. Nevertheless, under certain assumptions this more general setup could still be tractable. In particular, one could maintain the linear-quadratic structure by adopting the assumptions of \cite{Gatheral2011}. Gatheral and Schied \cite{Gatheral2011}  assume that $S$ is described by a geometric Brownian motion and  inventory risk is measured by time-averaged value-at-risk (VAR), i.e.~$\lambda_t = \lambda x_t S_t$. Making this adjustment in \eqref{oppr} leads to the problem of minimizing
\begin{align} \label{newproblem}
\check{u}(T,S,x,y) := \E_{S,x,y} \left[\int_{0}^{T} \left(\dot{x}^2_t + \kappa Y^2_t + \lambda x_t S_t \right) \, dt\right].
\end{align}
Assuming linear information leakage and that $S$ and $Y$ have  correlation $\rho \in [-1,1]$, $dW_t^{(S)} dW_t^{(Y)}= \rho dt$ allows for a closed-form solution
$$\check{u}(T,S,x,y) = x^2 \check{A}(T)+y^2 \check{B}(T)+x y \check{C}(T)+ S^2 \check{D}(T) +  S x \check{E}(T) + S y \check{F}(T) + \check{G}(T)$$
where the coefficients $\check{A},\ldots,$ solve yet again a Riccati ODE. Moreover, a fully closed form solution is possible for strategies that are myopic with respect to $Y_t$ (see \cite{gatheral2011optimal} for details). As in Remark \ref{rem:linear-riccati}, the main difficulty is that an inventory risk that is linear in $x$ cannot guarantee $x_s \ge 0$ and hence is likely to include buying, making the optimization over $T$ delicate.

In the dynamic case the execution strategy is given by
$$
\check{\alpha}^D(t,S_t,x_t,y_t) =\frac{1}{2} \Big\{x_t(2 \check{A}(T-t)+\eta \check{C}(T-t))+Y_t(\check{C}(T-t)+2\eta \check{B}(T-t))+S_t (\check{E}(T-t)+\eta \check{F}(T-t))\Big\}
$$
and is therefore linear in  price $S_t$. (The myopic strategy is also linear in $S_t$). Note that only the execution cost $\check{u}$ is impacted by $\rho$, while the strategies themselves are independent of the correlation between asset prices and order flow.
%
%
%
In reality, the joint behavior of order flow and asset prices remains poorly understood.  In fact, the positive correlation between market-order flow and asset price is a direct indication of adverse selection affecting liquidity providers, see e.g.~the very recent preprint \cite{CarmonaWebster14}. Further investigation into how the co-movement of order flow and asset prices might affect execution costs seems warranted.


Another piece that could be potentially added is LOB resiliency, i.e.~the intermediate effect of consuming liquidity which temporarily moves the execution price. As already explained, it is analytically quite similar to the modeled impact on $Y$, but instead operates directly on $S$. Resiliency allows to separate the temporary impact on the LOB that does not increase toxicity with the ``permanent'' impact on $Y$ that changes participant expectations of future flow and generates informational footprint.

\vspace{0.5in}
\appendix

\section{Finite Difference Approach to \eqref{eq:hjb2}}\label{App:AppendixA1}
We employ an explicit finite difference scheme to find an approximate solution to \eqref{eq:hjb2}.  Denote
$v_{i,j}:=v(x_i, y_j)$, where $x_i=i\Delta x$ for $i = 0, 1,\ldots, N$ and $y_j= y_0 + j \Delta y$ for $j=0,1, \ldots ,M$.
Derivatives of $v$ are approximated as
\begin{align*}
\frac{\partial}{\partial x } v(x_i, y_j)&=\frac{v_{i+1,j}-v_{i,j}}    {\Delta x};\\ \notag
\frac{\partial}{\partial y } v(x_i, y_j)&=\frac{v_{i,j+1}-v_{i,j-1}} {2\Delta y};\\ \notag
\frac{\partial^2}{\partial y^2 } v(x_i, y_j) &=\frac{v_{i,j+1}-2v_{i,j}+v_{i,j-1}}{(\Delta y)^2},
\end{align*}
and we apply the boundary condition $v(0,y)=v_{0,j}=0 \forall j$.   In $y$ we use the boundary conditions for $v_{i,0}$ and $v_{i,M}$ via  $\frac{\partial^2 v(x_{i},y_1)}{\partial y^2 }=\frac{\partial^2 v(x_{i},y_{M-1})}{\partial y^2 }=0$ and choose $y_0$ and $y_M$ such that $\mathbb{P}((Y_t)\notin [y_0,y_M])\approx 0$.  Substituting into \eqref{eq:hjb2} we have
\begin{align*}
0 =  \frac{1}{2} \sigma^2 \frac{\partial^2 v_{i,j}}{\partial y^2 }-\beta y_j \frac{\partial v_{i,j}}{\partial y }  + \kappa y^2_j + \lambda(x_i)-\frac{1}{4} \left(\frac{\partial v_{i,j}}{\partial x }+\eta \frac{\partial v_{i,j}}{\partial y }\right)^2
\end{align*}
and rearranging terms yields
\begin{multline*}
v_{i+1,j} =v_{i,j}  \\ + \Delta x \left(2\left(\kappa y^2_j+\lambda(x_i)-\beta y_j \frac{v_{i,j+1}-v_{i,j-1}} {2\Delta y}+\frac{\sigma^2}{2} \frac{v_{i,j+1}-2v_{i,j}+v_{i,j-1}}{(\Delta y)^2}\right)^{1/2} -\eta \frac{v_{i,j+1}-v_{i,j-1}} {2\Delta y}\right).
\end{multline*}

So we have a so called ``time-marching" scheme in $x$, where $v$ at each inventory level $i+1$ can be approximated by values from the previous inventory level $i$.  This explicit approach is available for appropriate parameter values in which $\alpha^*>0$ holds.  For more extreme parameter values we may have $\alpha^*=0$ (recall that we constrain $\alpha^*\geq 0$).  In this case our explicit method would fail for certain grid points and it would be necessary to utilize an alternative method.  

\section{Proof of Lemma \ref{lemma1}} \label{App:AppendixA}
\begin{proof}
We skip the trivial case $\lambda(x)=0$.  See \cite{grinoldactive1999} for details when $\lambda(x)=cx^2$.  For $\lambda(x)=cx$, when $x$ is unconstrained, the problem is a straightforward application of the Euler-Lagrange equation.  For cost functional $F(x,\dot{x},t)= \dot{x}_s^2 + c x_s$, the optimal trajectory $x^{MQ}_t$ must satisfy
\begin{align}
\left(\frac{dF}{dx}- \frac{d}{dt}\frac{dF}{ d\dot{x}}\right)=0
\end{align}
Applying the constraint that $x$ is decreasing introduces the boundary condition at $x=0$.  Then the optimal trajectory for the constrained minimization problem either lies on the curve which satisfies the Euler-Lagrange equation or lies along the boundary, with the transition from the former to the latter occurring where $x^{MQ}_t$ is tangent to the line $x=0$.  In order to find the point at which this transition takes place, we find $T'$ such that $\frac{dx^{*}_t}{dt}{\bigr |}_{t=T'}=0$.  It is easily verified that  $T'=\frac{2\sqrt{x}}{\sqrt{c}}$ satisfies the requirement and so we have $\hat{T} = \min(T, \frac{2\sqrt{x}}{\sqrt{c}})$.

Having computed $x^{M}_t$ for each choice inventory risk term, $\alpha^{M}_t$ and $\cI^{M}$ are computed respectively by differentiating with respect to $t$ and integrating over the interval $[0,T]$ ($[0,\hat{T}]$ for $\cI^{MQ}$).
 \hfill\end{proof}

\section{Proof of Proposition \ref{lemma1}} \label{App:AppendixB}
 \begin{proof}
An application of Fubini's theorem, permits the interchange of expectation and integration leaving us with a straight-forward but lengthy integral computation. The general expression for $\cO$ is
\begin{align}\label{eq:A-int}
\cO(T,x,y) = \int_0^T (y e^{-\beta t} - A_t)^2 + \frac{\sigma^2}{2\beta} (1-e^{-2\beta t}) dt
\end{align}
where
\begin{align}\label{eq:def-A}
A_t := \int_0^t e^{-\beta(t-s)}\phi_s ds
\end{align}
 captures all the information leakage.
Integrating the other two terms gives
\begin{align*}
\cO^{0}(T,x,y) & = \kappa \int_0^T (y e^{-\beta t})^2 + \sigma_t^2 dt \\
 & =  \frac{\kappa y^2}{2\beta} \left(1-e^{-2\beta T}\right) + \frac{\kappa \sigma^2}{4\beta^2}\left(
  2\beta T+e^{-2\beta T}-1\right).
\end{align*}
Computing the integral for the remaining three cases is straight-forward but tedious.  We provide the values of $A_t$ for each case.
For $\phi_t=\eta \alpha_t^{ML}$, we have $A^{ML}_t= \frac{\eta x}{\beta T}(1-e^{-\beta t})$.
 When $\phi_t=\eta \alpha_t^{MQ}$ we have
 \begin{align*}
 A^{MQ}_t &=  \eta {\int_{0}^{t}\left(\frac{c \hat{T}}{4}+\frac{x}{\hat{T}}-\frac{c t}{2}\right) e^{-\beta(t-s)} \, ds}\\
 &=\frac{\eta}{\beta^2}\left(c(1-\beta t+e^{-\beta t})+\beta (1-e^{-\beta t})\left(\frac{c \hat{T}}{4}+\frac{x}{\hat{T}}\right)\right).
 \end{align*}
 Lastly, when $\phi_t=\eta \alpha_t^{MH}$ and $\sqc\neq\beta$, we have

 \begin{align*}
 A^{MH}_t &= \eta {\int_{0}^{t}\frac{ \sqc x \cosh( \sqc(T-s) )}{\sinh(\sqc T)} e^{-\beta(t-s)} \hspace{1mm} ds}\\
 &=\frac{\eta \sqc x}{(c-\beta^2)\sinh(\sqc T)} \large{(} \lambda\sinh(\sqc(T-t)) +\sqc e^{-\beta t} \sinh(\sqc T)\\
 & \qquad -\beta \cosh(\sqc (T-t))+\beta e^{-\beta t} \cosh(\sqc T) \large{)}.
 \end{align*}

 If $\sqc=\beta$ then the final expression simplifies to
 \begin{align*}
 A^{MH'}_t=\frac{\eta x e^{-\beta t} e^{-\beta T} \left(2\beta t e^{2\beta T}+e^{2\beta t}-1\right)}{4\beta \sinh( \beta T)}.
 \end{align*}

One can finally integrate (using a symbolic integration software for example) \eqref{eq:A-int} over $t \in [0,T]$ to obtain closed-form expressions for $\cO^{MQ}$ and $\cO^{MH}$. The resulting formulas which take several lines are available on request from the authors.
\hfill \end{proof}

 \section{Proof of Proposition 2} \label{App:AppendixC}
\begin{proof}
 Substituting \eqref{eq:riccati} into the HJB PDE \eqref{eq:fpde2} we have
 \begin{align*}
 u^D_T &=\sigma^2 B(T)+\kappa y^2+c^2 x^2-\beta y(2yB(T)+xC(T))-
 \left( \frac{2xA(T)+yC(T)+\eta(2yB(T)+xC(T))}{2}\right)^2\\
 &=x^2 \left(c^2-(A(T))^2- \eta A(T)C(T)-(\eta C(T))^2\right)\\
 &+y^2\left(\kappa-2\beta B(T)-\left(\frac{C(T)}{2}\right)^2-\eta B(T)C(T)-(\eta B(T))^2\right)\\
 &+xy\left(-\beta C(T)-A(T)C(T)-2\eta A(T)B(T)-\frac{\eta (C(T))^2}{2}-\eta^2 B(T)C(T)\right)+\sigma^2 B(T).
\end{align*}
On the other hand, $u^D_T = x^2 A'(T)+y^2 B'(T)+x y C'(T)+ x D'(T) + y E'(T) + F'(T)$.
Matching the appropriate powers of $x,y$ on either side of the equality yields the system of Riccati differential equations \eqref{eq:syste}, with the boundary conditions \eqref{eq:singular} translating into \eqref{eq:cond}.
As previously mentioned, the first order terms $D(T)$ and $E(T)$ in $x$ and $y$ are not required if the inventory risk term is of order $x^2$ or $1$.
\hfill \end{proof}

\bibliographystyle{plain}
\bibliography{mybib}

\end{document}